\documentclass[11pt]{article}

\usepackage[utf8]{inputenc}
\usepackage[T1]{fontenc}
\usepackage{amsmath,amsthm}
\usepackage{newtxtext}
\usepackage[amssymbols]{newtxmath} % Times font with full math support
\usepackage{graphicx}
\usepackage{xcolor}
\usepackage{booktabs}
\usepackage{algorithm}
\usepackage{algpseudocode}
\usepackage{tikz}
\usetikzlibrary{arrows.meta,positioning,shapes.geometric}
\usepackage[colorlinks=true,linkcolor=blue!70!black,citecolor=blue!70!black,urlcolor=blue!70!black]{hyperref}
\usepackage{cite}
\usepackage[margin=1in]{geometry}
\usepackage{microtype}
\usepackage{caption}
\newtheorem{theorem}{Theorem}[section]

\newtheorem{definition}{Definition}[section]
\newtheorem{remark}{Remark}[section]

\newcommand{\RD}{R(D)}
\newcommand{\Hp}{H(p)}
\newcommand{\HD}{H(D)}
\newcommand{\Ber}{\mathrm{Bernoulli}}
\newcommand{\E}{\mathbb{E}}
\newcommand{\Var}{\mathrm{Var}}
\newcommand{\Prob}{\mathbb{P}}

\newcommand{\calX}{\mathcal{X}}
\newcommand{\calXhat}{\hat{\mathcal{X}}}
\newcommand{\Qinv}{Q^{-1}}

\title{\LARGE Finite Block Length Rate-Distortion Theory for the\\Bernoulli Source with Hamming Distortion: A Tutorial}
\author{Bhaskar Krishnamachari\\[4pt]
{\normalsize Viterbi School of Engineering}\\
{\normalsize University of Southern California}\\[2pt]
{\normalsize \texttt{bkrishna@usc.edu}}}
\date{\normalsize February 27, 2026}

\begin{document}
\maketitle

% ========================================================================
% ABSTRACT
% ========================================================================
\begin{abstract}
Lossy data compression lies at the heart of modern communication and storage systems.
Shannon's rate-distortion theory provides the fundamental limit on how much a source
can be compressed at a given fidelity, but it assumes infinitely long block lengths
that are never realized in practice.
We present a self-contained tutorial on rate-distortion theory for the simplest
non-trivial source: a Bernoulli$(p)$ sequence with Hamming distortion.
We derive the classical rate-distortion function $\RD = \Hp - \HD$ from first
principles, illustrate its computation via the Blahut-Arimoto algorithm, and then
develop the finite block length refinements that characterize how the minimum
achievable rate approaches the Shannon limit as the block length $n$ grows.
The central quantity in this refinement is the \emph{rate-distortion dispersion}
$V(D)$, which governs the $O(1/\sqrt{n})$ penalty for operating at finite block
lengths.
We accompany all theoretical developments with numerical examples and figures
generated by accompanying Python scripts.
\end{abstract}

% ========================================================================
% SECTION 1: INTRODUCTION
% ========================================================================
\section{Introduction}
\label{sec:introduction}

The theory of lossy data compression traces its origins to Claude Shannon's
landmark 1948 paper~\cite{shannon1948}, which established that every source has
a well-defined minimum description rate for any prescribed level of
distortion.
This result, made precise in Shannon's 1959 coding theorem for sources with a
fidelity criterion~\cite{shannon1959}, was remarkable for a reason that is easy
to overlook today: it demonstrated that a single, clean mathematical function,
the rate-distortion function $\RD$, separates the achievable from the impossible, no
matter how clever the compression scheme.

However, this elegant theory rests on a crucial idealization.
The rate-distortion function $\RD$ is an \emph{asymptotic} quantity: it describes
the minimum rate achievable when the block length $n$ tends to infinity.
Real communication and storage systems must operate with finite memory, finite
latency, and finite computational resources.
A natural and practically important question therefore arises: \emph{how much
extra rate do we need when the block length is finite?}

To build intuition, consider the simplest possible lossy compression scheme.
Suppose we have a $\Ber(0.5)$ source that produces sequences of $n = 2$ bits.
There are four possible source sequences: $00$, $01$, $10$, and $11$.
We wish to compress each sequence to just $1$ bit, so our codebook has
$M = 2$ entries and the rate is $R = \frac{1}{2}\log_2 2 = 0.5$ bits per
source symbol.
One natural code assigns codeword $0$ to the pair $\{00, 01\}$ and codeword $1$
to the pair $\{10, 11\}$: the encoder simply keeps the first bit and discards
the second.
The decoder reconstructs $00$ from codeword $0$ and $11$ from codeword $1$.
What distortion does this code achieve?
Under Hamming distortion (which counts the fraction of disagreeing positions),
the four source sequences yield distortions
$d(00, 00) = 0$, $d(01, 00) = 1/2$, $d(10, 11) = 1/2$, and $d(11, 11) = 0$.
Since the source is fair, each sequence is equally likely, so the average
distortion is $(0 + 1/2 + 1/2 + 0)/4 = 1/4$.
We have thus constructed a concrete code operating at rate $R = 0.5$ and
distortion $D = 0.25$.

Is this the best we can do at rate $0.5$?
Shannon's rate-distortion theory provides the answer.
One of the landmark results of information theory, derived in detail in
Section~\ref{sec:bernoulli_rd}, is that for a $\Ber(p)$ source with Hamming
distortion, the minimum achievable rate at distortion level $D$ is
\begin{equation}
\label{eq:rd_intro}
\RD = \Hp - \HD, \qquad 0 \leq D \leq \min(p, 1-p),
\end{equation}
where $\Hp = -p\log_2 p - (1-p)\log_2(1-p)$ is the \emph{binary entropy
function}, which measures the inherent uncertainty of the source.
For a fair coin ($p = 0.5$), we have $H(0.5) = 1$ bit, so the formula
simplifies to $\RD = 1 - \HD$.
Evaluating at $D = 0.25$ gives $R(0.25) = 1 - H(0.25) \approx 0.189$ bits
per symbol.
Since our simple code operates at rate $0.5$, well above the Shannon limit of
$0.189$, there is considerable room for improvement, but only in the limit of
large block lengths.
For $n = 2$, it turns out, we cannot do better.
To see why, note that any code with $M = 2$ assigns each of the four source
sequences to one of two reconstructions $\hat{x}_1, \hat{x}_2 \in \{0,1\}^2$
(which need not belong to the assigned group).
In $\{0,1\}^2$, each reconstruction can match at most one sequence exactly
(itself, at Hamming distance~$0$); every other sequence has Hamming
distance at least~$1$, contributing per-symbol distortion at least~$1/2$.
With only two reconstructions, at most two of the four sequences can achieve
distortion~$0$, so at least two sequences contribute distortion~$\geq 1/2$.
Since all four are equally likely, the average distortion is at least
$(0 + 0 + 1/2 + 1/2)/4 = 1/4$, which is exactly what our code achieves.
The gap between this finite-$n$ optimum of $D = 0.25$ at rate $0.5$ and the
Shannon limit of $R(0.25) \approx 0.189$ at the same distortion illustrates a
fundamental phenomenon: short codes pay a rate penalty compared to the
asymptotic limit.
Quantifying this penalty precisely is the goal of finite block length theory.

Over the past two decades, a precise answer to this question has emerged through
the work of Strassen~\cite{strassen1962},
Ingber and Kochman~\cite{ingber2011},
Kostina and Verd\'{u}~\cite{kostina2012}, and others.
To state their result, we need two preliminary ideas.

The first is a \emph{distortion measure}: a function $d(x, \hat{x})$ that
quantifies how ``far'' a reconstruction symbol $\hat{x}$ is from the original
source symbol $x$.
For binary data, the simplest and most natural choice is the \emph{Hamming
distortion}, which equals $1$ when $x \neq \hat{x}$ and $0$ when $x = \hat{x}$.
When we compress a length-$n$ source sequence $X^n = (X_1, \ldots, X_n)$ into
a reconstruction $\hat{X}^n = (\hat{X}_1, \ldots, \hat{X}_n)$, the overall
quality is measured by the \emph{per-symbol distortion}
$\frac{1}{n}\sum_{i=1}^{n} d(X_i, \hat{X}_i)$,
which under Hamming distortion is simply the fraction of positions where the
source and reconstruction disagree (the bit error rate).

The second idea is that we must refine what ``achievable'' means at finite
block length.
When $n$ is finite, no code can guarantee that \emph{every} source sequence is
reproduced within distortion $D$.
The source sequence $X^n$ is random, and some realizations are inherently harder
to compress than others.
We therefore allow a small probability of failure: we require only that the
distortion exceeds $D$ with probability at most $\varepsilon$.
More precisely, we say a code is $(n, D, \varepsilon)$-achievable at rate $R$
if
\begin{equation}
\label{eq:excess_distortion_intro}
\Prob\!\left(\frac{1}{n}\sum_{i=1}^{n} d(X_i, \hat{X}_i) > D\right) \leq \varepsilon,
\end{equation}
where the probability is over the randomness of the source sequence $X^n$.
The quantity $\varepsilon$ is called the \emph{excess-distortion probability}.
With this formulation, the minimum achievable rate $R(n, D, \varepsilon)$
depends on three parameters: the block length $n$, the target distortion $D$,
and the tolerated failure probability $\varepsilon$.

The key insight of the finite block length theory is that $R(n, D, \varepsilon)$
admits a clean asymptotic expansion:
\begin{equation}
\label{eq:normal_approx_intro}
R(n, D, \varepsilon) \approx \RD + \sqrt{\frac{V(D)}{n}}\, \Qinv(\varepsilon),
\end{equation}
where $V(D)$ is the \emph{rate-distortion dispersion}, a quantity that captures
how variable the compression difficulty is across source symbols, and
$\Qinv(\varepsilon)$ is the inverse of the Gaussian $Q$-function.
From an engineering standpoint, (\ref{eq:normal_approx_intro}) reveals that the
penalty for finite block length decays as $1/\sqrt{n}$, a rate that is neither
negligibly fast nor prohibitively slow.

In this tutorial, we develop the entire story from first principles for the simplest
non-trivial setting: a $\Ber(p)$ source with Hamming distortion.
We have chosen this source for three reasons.
First, the binary symmetric source is the discrete analogue of the Gaussian source:
it is the canonical ``textbook'' example against which all intuitions are calibrated.
Second, every quantity of interest ($\RD$, the optimal test channel, the $d$-tilted
information, and the dispersion $V(D)$) admits a clean closed-form expression.
Third, despite its simplicity, the Bernoulli source reveals the full structure of
the finite block length theory, including the role of source dispersion as a
second-order characterization of compression difficulty.

The key contributions of this tutorial are:
\begin{enumerate}
    \item A self-contained derivation of the rate-distortion function $\RD = \Hp - \HD$
          for the $\Ber(p)$ source, accessible to readers with minimal probability background.
    \item A detailed treatment of the Blahut-Arimoto algorithm, including explicit
          $2 \times 2$ matrix computations and convergence analysis.
    \item A development of finite block length rate-distortion theory, including the
          $d$-tilted information, rate-distortion dispersion, and the normal approximation.
    \item Accompanying Python scripts that reproduce all numerical results and figures.
\end{enumerate}

The remainder of this paper is organized as follows.
Section~\ref{sec:foundations} reviews the probability and information-theoretic
foundations.
Section~\ref{sec:rd_problem} formulates the rate-distortion problem.
Section~\ref{sec:bernoulli_rd} derives the rate-distortion function for the
Bernoulli source.
Section~\ref{sec:blahut_arimoto} presents the Blahut-Arimoto algorithm.
Section~\ref{sec:finite_blocklength} develops the finite block length theory.
Section~\ref{sec:numerical} presents comprehensive numerical explorations.
Finally, Section~\ref{sec:conclusion} concludes with a discussion of open
problems and further reading.

% ========================================================================
% SECTION 2: PROBABILITY AND INFORMATION FOUNDATIONS
% ========================================================================
\section{Probability and Information Foundations}
\label{sec:foundations}

In this section, we review the essential probability and information-theoretic
concepts that underpin rate-distortion theory.
We aim for an intuitive, example-driven development; readers seeking formal
generality may consult Cover and Thomas~\cite{cover2006}.

% ---- 2.1 ----
\subsection{Random Variables and Probability}
\label{subsec:rv}

Consider a coin that lands heads with probability $p$ and tails with
probability $1 - p$, where $0 \leq p \leq 1$.
If we encode heads as $1$ and tails as $0$, a single coin flip is described
by a \emph{Bernoulli random variable} $X$ with
\begin{equation}
\Prob(X = 1) = p, \qquad \Prob(X = 0) = 1 - p.
\end{equation}
We write $X \sim \Ber(p)$ and refer to $p$ as the \emph{bias} of the source.
A fair coin corresponds to $p = 1/2$, while a biased coin has $p \neq 1/2$.

When we flip the coin $n$ times independently, we obtain a \emph{sequence}
$X^n = (X_1, X_2, \ldots, X_n)$ of independent and identically distributed
(i.i.d.) Bernoulli random variables.
This sequence is the object we wish to compress.

% ---- 2.2 ----
\subsection{Entropy}
\label{subsec:entropy}

How much ``surprise'' does a single coin flip carry?
If the coin always lands heads ($p = 1$), there is no surprise at all; we know
the outcome in advance.
If the coin is fair ($p = 1/2$), each flip is maximally uncertain.
Shannon formalized this intuition through the \emph{entropy} of a random variable.

\begin{definition}[Binary Entropy]
\label{def:binary_entropy}
The \emph{binary entropy function} is defined as
\begin{equation}
\label{eq:binary_entropy}
H(p) = -p \log_2 p - (1-p) \log_2 (1-p),
\end{equation}
with the convention that $0 \log_2 0 = 0$.
\end{definition}

The entropy $\Hp$ measures the average surprise, in bits, of a single draw
from a $\Ber(p)$ source.
It is zero when $p = 0$ or $p = 1$ (no uncertainty) and achieves its maximum
value of $1$ bit when $p = 1/2$ (maximum uncertainty).
Figure~\ref{fig:binary_entropy} illustrates this behavior.

\begin{figure}[htbp]
    \centering
    \includegraphics[width=0.75\textwidth]{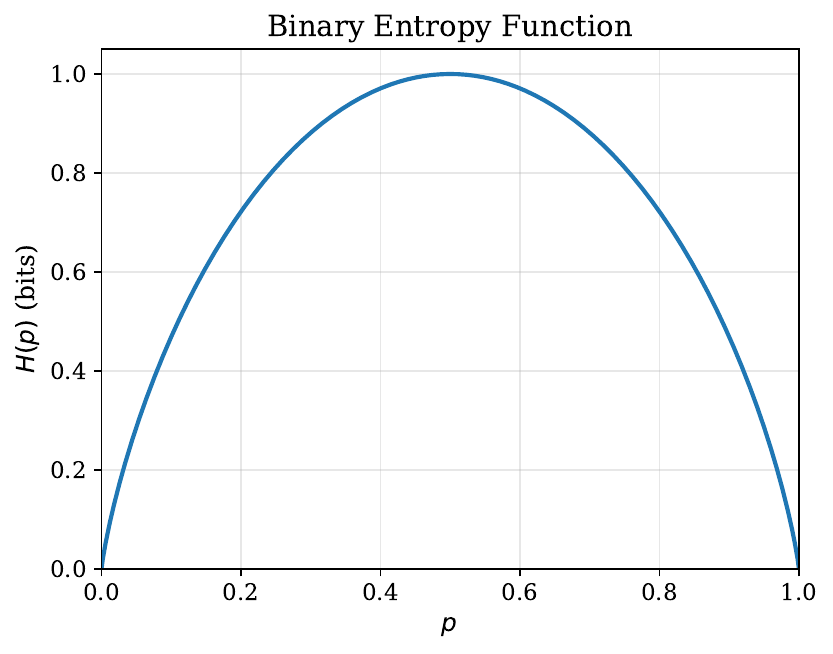}
    \caption{The binary entropy function $\Hp$ versus the source bias $p$.
    The entropy is maximized at $p = 1/2$, where each bit carries one full bit
    of information, and vanishes at $p \in \{0, 1\}$, where the source is
    deterministic.}
    \label{fig:binary_entropy}
\end{figure}

More generally, for a discrete random variable $X$ taking values in a finite
alphabet $\calX$ with probability mass function $p_X(x)$, the entropy is
\begin{equation}
\label{eq:entropy_general}
H(X) = -\sum_{x \in \calX} p_X(x) \log_2 p_X(x).
\end{equation}
The entropy quantifies the minimum average number of bits needed to losslessly
represent $X$.

% ---- 2.3 ----
\subsection{Sequences and Typical Sequences}
\label{subsec:typical}

Consider a sequence $x^n = (x_1, \ldots, x_n)$ produced by a $\Ber(p)$ source.
The \emph{type} of $x^n$ is the empirical distribution of symbols: if $x^n$
contains $k$ ones, its type is $k/n$.
For large $n$, the law of large numbers guarantees that the fraction of ones
concentrates around $p$.

A sequence is called \emph{typical} if its empirical statistics are close to
the true source distribution.
The \emph{asymptotic equipartition property} (AEP) states that with high
probability, a sequence drawn from a $\Ber(p)$ source satisfies
\begin{equation}
-\frac{1}{n} \log_2 p_{X^n}(X^n) \approx \Hp.
\end{equation}
Intuitively, there are approximately $2^{n\Hp}$ typical sequences, and they
account for almost all of the probability mass.
This observation is the foundation of both lossless and lossy compression:
loosely speaking, it suggests that we could represent all likely source sequences
by mapping them to just these $2^{n\Hp}$ typical sequences, each of which could
be identified using only $n \cdot \Hp$ bits.
In other words, each source symbol that nominally requires one bit to describe
can, on average, be compressed down to $\Hp$ bits.

% ---- 2.4 ----
\subsection{Mutual Information}
\label{subsec:mutual_info}

When we compress a source $X$ into a reconstruction $\hat{X}$, some information
about $X$ is preserved and some is lost.
The \emph{mutual information} $I(X; \hat{X})$ quantifies how much information
$\hat{X}$ retains about $X$.

\begin{definition}[Mutual Information]
\label{def:mutual_info}
For jointly distributed random variables $(X, \hat{X})$ with joint distribution
$p_{X,\hat{X}}(x, \hat{x})$, the mutual information is
\begin{equation}
\label{eq:mutual_info}
I(X; \hat{X}) = \sum_{x, \hat{x}} p_{X,\hat{X}}(x, \hat{x})
\log_2 \frac{p_{\hat{X}|X}(\hat{x}|x)}{p_{\hat{X}}(\hat{x})}.
\end{equation}
\end{definition}

An equivalent and often more intuitive expression is
\begin{equation}
\label{eq:mi_entropy}
I(X; \hat{X}) = H(X) - H(X | \hat{X}),
\end{equation}
where $H(X|\hat{X})$ is the conditional entropy, that is, the residual uncertainty
about $X$ after observing $\hat{X}$.
The mutual information is always non-negative, and equals zero if and only if
$X$ and $\hat{X}$ are independent.
It equals $H(X)$ when $\hat{X}$ determines $X$ perfectly.

% ========================================================================
% SECTION 3: THE RATE-DISTORTION PROBLEM
% ========================================================================
\section{The Rate-Distortion Problem}
\label{sec:rd_problem}

In this section, we formulate the central problem of lossy compression.
We define what it means to compress a source with a prescribed fidelity, and
state Shannon's fundamental theorem that establishes the existence of a minimum
achievable rate.

% ---- 3.1 ----
\subsection{What Is Lossy Compression?}
\label{subsec:lossy}

Consider a source that emits a sequence $X^n = (X_1, \ldots, X_n)$ of $n$
i.i.d.\ $\Ber(p)$ random variables.
A \emph{lossy compression scheme} consists of two mappings:
\begin{itemize}
    \item An \emph{encoder} $f_n : \{0,1\}^n \to \{1, 2, \ldots, M\}$ that maps
          the source sequence to one of $M$ codewords.
    \item A \emph{decoder} $g_n : \{1, 2, \ldots, M\} \to \{0,1\}^n$ that maps
          each codeword index back to a reconstruction sequence
          $\hat{X}^n = g_n(f_n(X^n))$.
\end{itemize}
The set $\mathcal{C} = \{g_n(1), g_n(2), \ldots, g_n(M)\}$ is called the
\emph{codebook}, and each element is a \emph{reproduction sequence}.

The \emph{rate} of the code is
\begin{equation}
\label{eq:rate_def}
R = \frac{1}{n} \log_2 M \quad \text{bits per source symbol}.
\end{equation}
This quantity measures the average number of bits used to represent each source
symbol.
A lower rate means more aggressive compression.

% ---- 3.2 ----
\subsection{Distortion Measures}
\label{subsec:distortion}

To quantify the fidelity of the reconstruction, we need a \emph{distortion measure}.
For binary sequences, the natural choice is the \emph{Hamming distortion}:
\begin{equation}
\label{eq:hamming}
d(x, \hat{x}) =
\begin{cases}
0, & \text{if } x = \hat{x}, \\
1, & \text{if } x \neq \hat{x}.
\end{cases}
\end{equation}
The Hamming distortion simply counts whether a symbol was reproduced correctly.

For a pair of sequences $(x^n, \hat{x}^n)$, the \emph{per-symbol distortion} is
\begin{equation}
\label{eq:block_distortion}
d(x^n, \hat{x}^n) = \frac{1}{n} \sum_{i=1}^{n} d(x_i, \hat{x}_i),
\end{equation}
which equals the fraction of positions where the source and reconstruction
disagree, that is, the \emph{bit error rate}.

% ---- 3.3 ----
\subsection{The Fundamental Question}
\label{subsec:fundamental}

We can now state the central question of rate-distortion theory:

\medskip
\noindent\emph{What is the minimum rate $R$ such that there exists a sequence of
encoder-decoder pairs achieving average distortion at most $D$?}
\medskip

Shannon's remarkable insight was that, in the limit $n \to \infty$, this minimum
rate converges to a ``single-letter'' quantity $R(D)$---an optimization involving
only the distribution of one source symbol and one reproduction symbol, not the
full length-$n$ sequences~\cite{shannon1959}.
We do not need to search over all possible encoder-decoder pairs of all possible
block lengths: the \emph{asymptotic} answer depends only on the source
distribution $p_X$ and the distortion measure $d$.
(For any finite $n$, the actual minimum rate exceeds $R(D)$; quantifying this
gap is the subject of Section~\ref{sec:finite_blocklength}.)

% ---- 3.4 ----
\subsection{The Test Channel and the Rate-Distortion Function}
\label{subsec:test_channel}

The key to Shannon's formulation is the concept of a \emph{test channel}.
Rather than optimizing over encoder-decoder pairs, we optimize over conditional
distributions $p_{\hat{X}|X}(\hat{x}|x)$ that describe a probabilistic mapping
from source symbols to reconstruction symbols.

\begin{definition}[Rate-Distortion Function]
\label{def:rd_function}
The rate-distortion function of a source $X$ with distortion measure $d$ is
\begin{equation}
\label{eq:rd_function}
\RD = \min_{\substack{p_{\hat{X}|X}: \\ \E[d(X, \hat{X})] \leq D}} I(X; \hat{X}),
\end{equation}
where the minimization is over all conditional distributions $p_{\hat{X}|X}$
satisfying the distortion constraint.
\end{definition}

Why do we \emph{minimize} the mutual information?
The mutual information $I(X; \hat{X})$ measures how many bits the reconstruction
$\hat{X}$ reveals about the original source $X$.
Any information that $\hat{X}$ carries about $X$ must have been conveyed by the
encoder, so $I(X; \hat{X})$ is precisely the rate cost of the code.
The distortion constraint $\E[d(X, \hat{X})] \leq D$ separately ensures that the
reconstruction is faithful enough; our goal is then to find the cheapest
test channel (lowest rate) that still keeps distortion within the budget $D$.

The rate-distortion function therefore has a clean operational interpretation:
$\RD$ is the minimum number of bits per source symbol required to describe the
source with average distortion at most $D$.
Rates above $\RD$ are achievable (there exist codes that work), while rates
below $\RD$ are not achievable by any code, regardless of its complexity.

Figure~\ref{fig:test_channel} illustrates the encoder-decoder structure and the
test channel abstraction.

\begin{figure}[htbp]
    \centering
    \begin{tikzpicture}[>=Stealth, node distance=1.4cm, font=\small]
        % Source
        \node[draw, rounded corners, fill=blue!8, minimum width=1.3cm,
              minimum height=0.8cm] (src) {Source};
        % Encoder
        \node[draw, rounded corners, fill=orange!12, minimum width=1.3cm,
              minimum height=0.8cm, right=of src] (enc) {Encoder};
        % Channel
        \node[right=1.1cm of enc] (mid) {};
        % Decoder
        \node[draw, rounded corners, fill=green!10, minimum width=1.3cm,
              minimum height=0.8cm, right=2.2cm of enc] (dec) {Decoder};
        % Output
        \node[right=0.8cm of dec] (out) {$\hat{X}^n$};

        % Arrows
        \draw[->] (src) -- node[above]{\footnotesize $X^n$} (enc);
        \draw[->] (enc) -- node[above]{\footnotesize $nR$ bits} (dec);
        \draw[->] (dec) -- (out);

        % Test channel below
        \node[below=2.0cm of src, xshift=1.0cm] (x) {$X$};
        \node[draw, rounded corners, fill=purple!10, minimum width=2.2cm,
              minimum height=0.8cm, right=0.8cm of x] (tc) {$p_{\hat{X}|X}$};
        \node[right=0.8cm of tc] (xhat) {$\hat{X}$};
        \draw[->] (x) -- (tc);
        \draw[->] (tc) -- (xhat);

        % Labels
        \node[above=0.15cm of enc, xshift=0.3cm, gray] {\footnotesize Operational};
        \node[above=0.15cm of tc, gray] {\footnotesize Test Channel};
    \end{tikzpicture}
    \caption{Top: the operational lossy compression setup with encoder and decoder.
    Bottom: the test channel $p_{\hat{X}|X}$ that abstracts away the codebook
    structure.  The rate-distortion function minimizes mutual information
    $I(X;\hat{X})$ over all test channels satisfying the distortion constraint.}
    \label{fig:test_channel}
\end{figure}
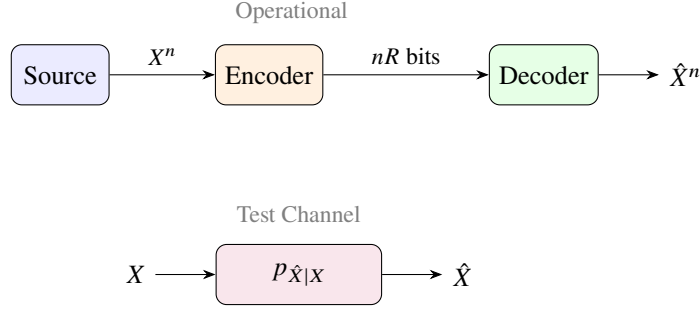

% ========================================================================
% SECTION 4: RATE-DISTORTION FOR THE BERNOULLI SOURCE
% ========================================================================
\section{The Rate-Distortion Function for the Bernoulli Source}
\label{sec:bernoulli_rd}

In this section, we derive the rate-distortion function for the $\Ber(p)$
source with Hamming distortion.
This is perhaps the cleanest closed-form result in all of rate-distortion theory.

% ---- 4.1 ----
\subsection{Setting Up the Optimization}
\label{subsec:rd_setup}

We wish to solve
\begin{equation}
\label{eq:rd_bernoulli_opt}
\RD = \min_{\substack{p_{\hat{X}|X}: \\ \E[d(X, \hat{X})] \leq D}} I(X; \hat{X})
\end{equation}
for $X \sim \Ber(p)$ and Hamming distortion $d(x, \hat{x}) = \mathbf{1}\{x \neq \hat{x}\}$.

Since both the source alphabet $\calX = \{0, 1\}$ and the reproduction alphabet
$\calXhat = \{0, 1\}$ are binary, the test channel $p_{\hat{X}|X}$ is a $2 \times 2$
stochastic matrix with four parameters, of which two are free (each row sums to one).
We can parameterize the test channel as
\begin{equation}
\label{eq:test_channel_matrix}
\begin{pmatrix}
p_{\hat{X}|X}(0|0) & p_{\hat{X}|X}(1|0) \\
p_{\hat{X}|X}(0|1) & p_{\hat{X}|X}(1|1)
\end{pmatrix}
=
\begin{pmatrix}
1 - \alpha & \alpha \\
\beta & 1 - \beta
\end{pmatrix},
\end{equation}
where $\alpha = p_{\hat{X}|X}(1|0)$ is the probability of flipping a $0$ to a $1$,
and $\beta = p_{\hat{X}|X}(0|1)$ is the probability of flipping a $1$ to a $0$.

The expected distortion under this test channel is
\begin{equation}
\E[d(X, \hat{X})] = (1-p)\alpha + p\beta.
\end{equation}
We seek the test channel parameters $(\alpha, \beta)$ that minimize the mutual
information $I(X; \hat{X})$ subject to $(1-p)\alpha + p\beta \leq D$.

% ---- 4.2 ----
\subsection{The Optimal Test Channel}
\label{subsec:optimal_test_channel}

We present two derivations of the optimal test channel.
The first uses the classical Lagrangian/KKT machinery, which exposes the
algebraic structure of the solution.
The second uses an entropy-maximization argument that provides complementary
geometric intuition.

\subsubsection*{Derivation 1: Lagrangian and KKT Conditions}

To solve the optimization~(\ref{eq:rd_bernoulli_opt}), we form the Lagrangian
\begin{equation}
\label{eq:lagrangian}
\mathcal{L} = I(X; \hat{X}) + \lambda \bigl(\E[d(X, \hat{X})] - D\bigr),
\end{equation}
where $\lambda \geq 0$ is the Lagrange multiplier.
We work through the Karush-Kuhn-Tucker (KKT) conditions step by step.

\smallskip
\noindent\textbf{Step 1: Expand the Lagrangian.}
Using the mutual information definition~(\ref{eq:mutual_info}), we write
$I(X;\hat{X})$ as a function of the test channel parameters $\alpha$ and
$\beta$ from~(\ref{eq:test_channel_matrix}).
The marginal reproduction distribution is
\[
p_{\hat{X}}(0) = (1-p)(1-\alpha) + p\beta, \qquad
p_{\hat{X}}(1) = (1-p)\alpha + p(1-\beta).
\]
The mutual information is
\begin{align}
I(X;\hat{X}) &= (1-p)(1-\alpha)\log_2\frac{1-\alpha}{p_{\hat{X}}(0)}
  + (1-p)\alpha\log_2\frac{\alpha}{p_{\hat{X}}(1)} \notag\\
  &\quad + p\beta\log_2\frac{\beta}{p_{\hat{X}}(0)}
  + p(1-\beta)\log_2\frac{1-\beta}{p_{\hat{X}}(1)},
\end{align}
and the distortion is $\E[d(X,\hat{X})] = (1-p)\alpha + p\beta$.
The full Lagrangian is therefore
\[
\mathcal{L}(\alpha,\beta,\lambda) = I(X;\hat{X}) + \lambda\bigl[(1-p)\alpha + p\beta - D\bigr].
\]

\smallskip
\noindent\textbf{Step 2: Stationarity conditions.}
We differentiate $\mathcal{L}$ with respect to $\alpha$ and $\beta$ and set
each derivative to zero.
Differentiating the mutual information with respect to $\alpha$, using
$\partial p_{\hat{X}}(0)/\partial\alpha = -(1-p)$ and
$\partial p_{\hat{X}}(1)/\partial\alpha = (1-p)$, gives (after simplification)
\[
\frac{\partial I}{\partial \alpha}
= (1-p)\log_2\frac{\alpha\, p_{\hat{X}}(0)}{(1-\alpha)\,p_{\hat{X}}(1)}.
\]
Setting $\partial\mathcal{L}/\partial\alpha = 0$ yields
\begin{equation}
\label{eq:kkt_alpha}
\log_2\frac{\alpha\, p_{\hat{X}}(0)}{(1-\alpha)\,p_{\hat{X}}(1)} = -\lambda.
\end{equation}
By an identical calculation with respect to $\beta$:
\begin{equation}
\label{eq:kkt_beta}
\log_2\frac{(1-\beta)\, p_{\hat{X}}(0)}{\beta\,p_{\hat{X}}(1)} = \lambda.
\end{equation}

\smallskip
\noindent\textbf{Step 3: Recognize the exponential (Gibbs) form.}
Equation~(\ref{eq:kkt_alpha}) says
$\alpha/(1-\alpha) = 2^{-\lambda}\,p_{\hat{X}}(1)/p_{\hat{X}}(0)$.
Since $\alpha = p_{\hat{X}|X}(1|0)$ and $1-\alpha = p_{\hat{X}|X}(0|0)$,
this means
\[
\frac{p_{\hat{X}|X}(1|0)}{p_{\hat{X}|X}(0|0)}
= \frac{p_{\hat{X}}(1)}{p_{\hat{X}}(0)}\cdot 2^{-\lambda}.
\]
For source symbol $x = 0$, the distortions are $d(0,0) = 0$ and $d(0,1) = 1$,
so the factor $2^{-\lambda}$ acts as an exponential penalty on the mismatched
reproduction.
Similarly,~(\ref{eq:kkt_beta}) gives
$p_{\hat{X}|X}(0|1)/p_{\hat{X}|X}(1|1) = (p_{\hat{X}}(0)/p_{\hat{X}}(1))
\cdot 2^{-\lambda}$,
again penalizing the mismatch $d(1,0) = 1$.
Both conditions are unified by the \emph{Gibbs form}:
\begin{equation}
\label{eq:gibbs_form}
p^*_{\hat{X}|X}(\hat{x}|x)
= \frac{Q^*(\hat{x})\, 2^{-\lambda\, d(x,\hat{x})}}{Z(x)},
\end{equation}
where $Q^*(\hat{x}) = p^*_{\hat{X}}(\hat{x})$ is the optimal reproduction
distribution (which the test channel must be self-consistent with) and
$Z(x) = \sum_{\hat{x}} Q^*(\hat{x})\,2^{-\lambda\,d(x,\hat{x})}$ is the
normalizing constant ensuring $\sum_{\hat{x}} p^*_{\hat{X}|X}(\hat{x}|x) = 1$.
(The notation $Q^*$ for the reproduction distribution is standard in
rate-distortion theory and should not be confused with the Gaussian $Q$-function
introduced in Section~\ref{subsec:normal_approx}.)

The Gibbs form has a natural interpretation: the optimal test channel takes
the reproduction distribution $Q^*$ and \emph{reweights} each reproduction symbol
$\hat{x}$ by an exponential factor $2^{-\lambda\,d(x,\hat{x})}$.
Symbols close to $x$ (low distortion) keep their weight, while symbols far
from $x$ (high distortion) are exponentially suppressed.
The Lagrange multiplier $\lambda$ controls the severity of this penalty:
larger $\lambda$ (corresponding to tighter distortion constraints) produces
a more concentrated channel.

\smallskip
\noindent\textbf{Step 4: Identify the backward channel.}
The Gibbs form~(\ref{eq:gibbs_form}) reveals a remarkable structure when we
look at the \emph{backward} (reverse) channel via Bayes' rule:
\[
p^*_{X|\hat{X}}(x|\hat{x})
= \frac{p_X(x)\,p^*_{\hat{X}|X}(\hat{x}|x)}{Q^*(\hat{x})}
= \frac{p_X(x)\, 2^{-\lambda\, d(x,\hat{x})}}{Z(x)},
\]
where $Z(x) = \sum_{\hat{x}} Q^*(\hat{x})\,2^{-\lambda\,d(x,\hat{x})}$ is
the same forward normalizer from~(\ref{eq:gibbs_form}).
A crucial subtlety: $Z(x)$ depends on the \emph{source} symbol $x$, not the
reproduction $\hat{x}$.
For Hamming distortion, $d(x,\hat{x}) = 0$ when $x = \hat{x}$ and $1$ when
$x \neq \hat{x}$, so the two normalizers are
\[
Z(0) = Q^*(0) + Q^*(1)\cdot 2^{-\lambda},\qquad
Z(1) = Q^*(0)\cdot 2^{-\lambda} + Q^*(1).
\]
Writing out all four backward-channel entries, grouped by source symbol
(entries sharing the same $x$ share the same denominator):
\begin{alignat*}{2}
&\text{Source }x=0:\quad
& p^*_{X|\hat{X}}(0|0) &= \frac{1-p}{Z(0)},\qquad
  p^*_{X|\hat{X}}(0|1) = \frac{(1-p)\cdot 2^{-\lambda}}{Z(0)},\\[4pt]
&\text{Source }x=1:\quad
& p^*_{X|\hat{X}}(1|0) &= \frac{p\cdot 2^{-\lambda}}{Z(1)},\qquad
  p^*_{X|\hat{X}}(1|1) = \frac{p}{Z(1)}.
\end{alignat*}

\smallskip
\noindent\textbf{Step 5: Complementary slackness and solving for $\lambda$.}
The KKT conditions require \emph{complementary slackness}:
$\lambda\bigl[(1-p)\alpha + p\beta - D\bigr] = 0$.
For $0 < D < \min(p, 1-p)$ the rate is strictly positive, so the distortion
constraint must be active ($\lambda > 0$), giving
$\E[d(X,\hat{X})] = D$.

We now determine $\lambda$ by requiring the backward channel to be a BSC($D$).
Consider the two entries with source symbol $x = 0$ (which share the
denominator $Z(0)$).
Setting $p^*_{X|\hat{X}}(0|0) = 1-D$ gives
\[
Z(0) = \frac{1-p}{1-D}.
\]
Then requiring $p^*_{X|\hat{X}}(0|1) = D$:
\[
\frac{(1-p)\cdot 2^{-\lambda}}{Z(0)}
= \frac{(1-p)\cdot 2^{-\lambda}}{(1-p)/(1-D)}
= (1-D)\cdot 2^{-\lambda} = D,
\]
which gives $2^{-\lambda} = D/(1-D)$, i.e.,
\begin{equation}
\label{eq:lambda_kkt}
\lambda = \log_2\frac{1-D}{D}.
\end{equation}
One can verify the $x = 1$ entries give the same result:
$p^*_{X|\hat{X}}(1|1) = 1-D$ forces $Z(1) = p/(1-D)$, and then
$p^*_{X|\hat{X}}(1|0) = p\cdot 2^{-\lambda}/Z(1) = (1-D)\cdot D/(1-D) = D$.\;\checkmark

Therefore the backward channel is a BSC($D$):
\begin{equation}
\label{eq:backward_channel}
p^*_{X|\hat{X}}(x|\hat{x}) =
\begin{cases}
1-D, & \text{if } x = \hat{x},\\
D, & \text{if } x \neq \hat{x}.
\end{cases}
\end{equation}

\smallskip
\noindent\textbf{Step 6: Derive the reproduction distribution and forward channel.}
From the backward channel BSC($D$) and Bayes' rule, the reproduction
distribution must satisfy
$p_X(x) = \sum_{\hat{x}} Q^*(\hat{x})\,p^*_{X|\hat{X}}(x|\hat{x})$.
For $x = 1$: $p = Q^*(1)(1-D) + Q^*(0)D$.
Solving:
\begin{equation}
\label{eq:Qstar}
Q^*(1) = \frac{p - D}{1 - 2D}, \qquad
Q^*(0) = \frac{1 - p - D}{1 - 2D}.
\end{equation}
(Both are positive when $D < \min(p, 1-p)$, which is the interesting regime.)
The forward channel $p^*_{\hat{X}|X}$ is recovered from Bayes' rule,
$p^*_{\hat{X}|X}(\hat{x}|x) = Q^*(\hat{x})\,p^*_{X|\hat{X}}(x|\hat{x})
/p_X(x)$:
\begin{equation}
\label{eq:optimal_test_channel}
p^*_{\hat{X}|X} =
\begin{pmatrix}
\frac{Q^*(0)(1-D)}{1-p} & \frac{Q^*(1)\,D}{1-p} \\[6pt]
\frac{Q^*(0)\,D}{p} & \frac{Q^*(1)(1-D)}{p}
\end{pmatrix}.
\end{equation}
One can verify that each row sums to one.

\begin{remark}
When $p = 1/2$, we have $Q^*(0) = Q^*(1) = 1/2$, and the forward
channel~(\ref{eq:optimal_test_channel}) reduces to a BSC($D$) ---
the symmetric case that many textbooks present as the general answer.
For $p \neq 1/2$, however, the forward channel is \emph{asymmetric}:
the probability of flipping a $0$ to a $1$ differs from the probability of
flipping a $1$ to a $0$.
The key structural insight is that it is the \emph{backward} channel that is
symmetric, not the forward channel.
\end{remark}

The optimal reproduction distribution $Q^*$ was determined
in~(\ref{eq:Qstar}):
\begin{align}
Q^*(0) &= \frac{1 - p - D}{1 - 2D}, \label{eq:repro_dist_0}\\
Q^*(1) &= \frac{p - D}{1 - 2D}. \label{eq:repro_dist_1}
\end{align}
Note that $Q^*(0) + Q^*(1) = 1$, as expected, and $Q^*(1) = p$ only when
$p = 1/2$.
We will use this reproduction distribution again in
Section~\ref{subsec:dtilted} when computing the $d$-tilted information.

The corresponding Lagrange multiplier is
\begin{equation}
\label{eq:lambda_star}
\lambda^* = \log_2 \frac{1-D}{D},
\end{equation}
which is the same value derived in~(\ref{eq:lambda_kkt}).
This is not a coincidence: in any constrained optimization, the Lagrange
multiplier equals the sensitivity of the objective to the constraint.
Here, $\lambda^*$ tells us how much the minimum rate $\RD$ changes when we relax
the distortion budget by a small amount~$dD$.
Formally, $\lambda^* = -R'(D)$, which we can verify directly:
$R'(D) = -H'(D) = \log_2\frac{D}{1-D}$,
so $-R'(D) = \log_2\frac{1-D}{D} = \lambda^*$.
The multiplier is large when $D$ is small (the rate curve is steep, so each
additional bit of distortion saves many bits of rate) and approaches zero as
$D \to \min(p,1-p)$ (the curve flattens near zero rate).

\subsubsection*{Derivation 2: Entropy Maximization}

We now present a second, more direct derivation that bypasses the Lagrangian
machinery entirely.
It relies on three ideas, each of which we explain carefully.

\smallskip
\noindent\textbf{Idea 1: Minimizing mutual information $=$ maximizing
conditional entropy.}
Recall from~(\ref{eq:mi_entropy}) that $I(X;\hat{X}) = H(X) - H(X|\hat{X})$.
Since the source entropy $H(X) = H(p)$ is fixed (it does not depend on the
test channel), minimizing $I(X;\hat{X})$ over the test channel is the same as
maximizing $H(X|\hat{X})$.
Intuitively, $H(X|\hat{X})$ measures how much uncertainty about $X$ remains
\emph{after} seeing $\hat{X}$.
A good lossy code should leave as much residual uncertainty as possible
(retaining only the information needed to meet the distortion target), thereby
minimizing the rate.

\smallskip
\noindent\textbf{Idea 2: Relating conditional entropy to the error variable.}
Define the \emph{error variable} $Z = X \oplus \hat{X}$ (addition modulo~$2$),
so $Z = 1$ when the source and reconstruction disagree and $Z = 0$ otherwise.
Since $X$ is completely determined by the pair $(\hat{X}, Z)$ via $X = \hat{X}
\oplus Z$, knowing $\hat{X}$ and $Z$ tells you $X$ exactly.
By the chain rule of entropy, $H(X|\hat{X}) = H(Z|\hat{X})$:
the residual uncertainty about $X$ given $\hat{X}$ is the same as the
uncertainty about the error pattern $Z$ given $\hat{X}$.

\smallskip
\noindent\textbf{Idea 3: Two upper bounds on $H(Z|\hat{X})$.}
\begin{align}
H(X|\hat{X}) = H(Z|\hat{X})
&\leq H(Z) \label{eq:cond_reduces}\\
&\leq H(D). \label{eq:maxent}
\end{align}
\emph{Why does~(\ref{eq:cond_reduces}) hold?}
This is the general fact that ``conditioning reduces entropy'': knowing
$\hat{X}$ can only help (or not hurt) in predicting $Z$, so
$H(Z|\hat{X}) \leq H(Z)$.
Equality holds if and only if $Z$ is independent of $\hat{X}$ --- that is,
knowing the reconstruction tells you nothing about the error pattern.

\emph{Why does~(\ref{eq:maxent}) hold?}
Since $Z$ is a binary random variable with $\E[Z] = \E[d(X,\hat{X})] \leq D$,
the entropy of $Z$ is at most $H(D)$.
This is because, among all binary random variables with mean at most $D$,
the $\Ber(D)$ distribution has the maximum entropy.
(This is a special case of the general principle that the maximum-entropy
distribution subject to a mean constraint is exponential; for binary variables,
it is Bernoulli.)

\smallskip
\noindent\textbf{Achieving both bounds simultaneously.}
Both inequalities become equalities when:
\begin{enumerate}
    \item $Z$ is independent of $\hat{X}$ (so that $H(Z|\hat{X}) = H(Z)$), and
    \item $Z \sim \Ber(D)$ (so that $H(Z) = H(D)$ and $\E[Z] = D$).
\end{enumerate}
In other words, the optimal scheme has the error $Z \sim \Ber(D)$ acting
independently of the reconstruction $\hat{X}$.
Since $X = \hat{X} \oplus Z$, this means the backward channel
$p_{X|\hat{X}}$ is a BSC with crossover probability $D$, exactly
as~(\ref{eq:backward_channel}).
The reproduction distribution $Q^*$ and forward channel then follow from
Bayes' rule, exactly as in Steps~5--6 of Derivation~1.

% ---- 4.3 ----
\subsection{The Closed-Form Result}
\label{subsec:rd_closedform}

Having identified that the optimal backward channel is BSC($D$), the
rate-distortion function follows immediately.
From the entropy-maximization viewpoint (Derivation~2 of
Section~\ref{subsec:optimal_test_channel}), the maximum conditional entropy is
$H(X|\hat{X}) = H(D)$ (achieved when $Z \sim \Ber(D)$ is independent of
$\hat{X}$), so
\[
\RD = H(p) - \max H(X|\hat{X}) = H(p) - H(D).
\]
Equivalently, from the Lagrangian viewpoint (Derivation~1), $I(X;\hat{X})
= H(X) - H(X|\hat{X}) = H(p) - H(D)$ when evaluated at the optimal test
channel, since the backward BSC($D$) gives $H(X|\hat{X}) = H(D)$.
When $D \leq \min(p, 1-p)$, the distortion constraint is active and the
rate-distortion function is
\begin{equation}
\label{eq:rd_bernoulli}
\boxed{\RD = \Hp - \HD, \qquad 0 \leq D \leq \min(p, 1-p).}
\end{equation}
For $D > \min(p, 1-p)$, we have $\RD = 0$.

\begin{remark}
The formula $\RD = \Hp - \HD$ has an appealing interpretation: the rate equals
the entropy of the source minus the entropy of the ``noise'' introduced by the
test channel.
As the allowed distortion $D$ increases, the noise entropy $\HD$ grows, and
fewer bits are needed to describe the source.
\end{remark}

Let us examine the boundary cases.
When $D = 0$, we require perfect reconstruction, and $R(0) = \Hp$: the rate
equals the source entropy, which is the lossless compression limit.
When $D = \min(p, 1-p)$, the rate drops to zero.
In this regime, the decoder can simply output the more likely symbol ($0$ if
$p < 1/2$, or $1$ if $p > 1/2$) for every position, achieving distortion
$\min(p, 1-p)$ with zero rate.

Figure~\ref{fig:rd_curves} shows the rate-distortion function for several values
of the source bias $p$.

\begin{figure}[htbp]
    \centering
    \includegraphics[width=0.75\textwidth]{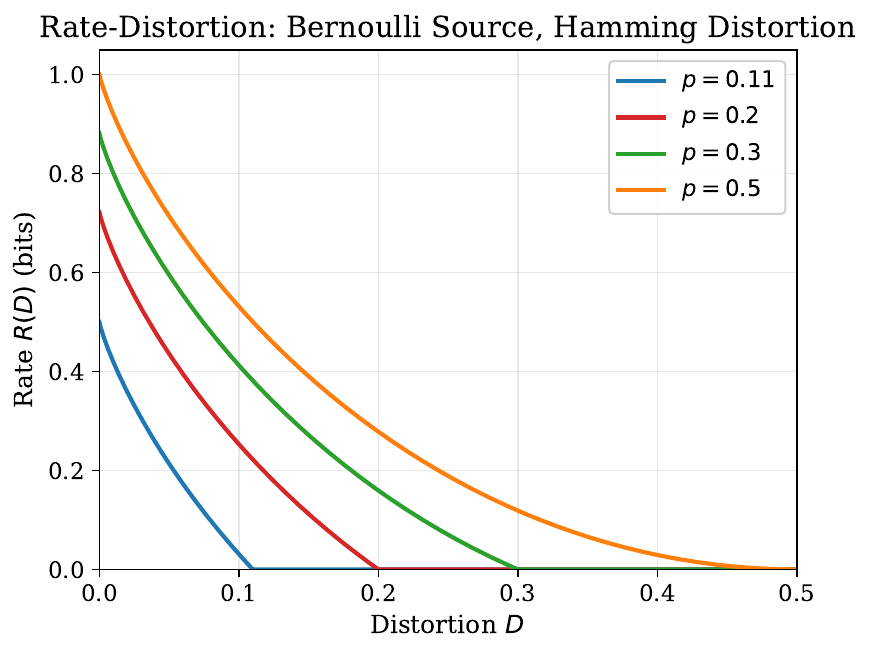}
    \caption{The rate-distortion function $\RD = \Hp - \HD$ for a $\Ber(p)$
    source with Hamming distortion, shown for $p \in \{0.11, 0.2, 0.3, 0.5\}$.
    Each curve is convex and decreasing, starting at $R(0) = \Hp$ and reaching
    zero at $D = \min(p, 1-p)$.
    The $p = 0.5$ curve starts highest because the fair coin has the most entropy.}
    \label{fig:rd_curves}
\end{figure}

We note three important properties of the rate-distortion function:
\begin{enumerate}
    \item \textbf{Convexity:} $\RD$ is a convex function of $D$.
          This means that each additional unit of distortion ``buys'' progressively
          less rate reduction.
    \item \textbf{Monotonicity:} $\RD$ is non-increasing in $D$.
          Allowing more distortion can only help (or leave unchanged) the compression rate.
    \item \textbf{Continuity:} $\RD$ is continuous on $[0, \min(p, 1-p)]$.
\end{enumerate}

% ---- 4.4 ----
\subsection{Historical Note}
\label{subsec:rd_history}

Shannon stated the rate-distortion function for the binary source in his 1959
paper~\cite{shannon1959}.
A comprehensive treatment of rate-distortion theory for general sources was
developed by Berger~\cite{berger1971}.
The elegant formula $\RD = \Hp - \HD$ serves as the starting point for
virtually every textbook discussion of lossy source coding; see, for
example, Cover and Thomas~\cite{cover2006}.

% ========================================================================
% SECTION 5: THE BLAHUT-ARIMOTO ALGORITHM
% ========================================================================
\section{The Blahut-Arimoto Algorithm}
\label{sec:blahut_arimoto}

In this section, we present the Blahut-Arimoto algorithm, a powerful iterative
method for computing rate-distortion functions.
While the Bernoulli source admits a closed-form solution, the Blahut-Arimoto
algorithm applies to arbitrary finite-alphabet sources and distortion measures.

% ---- 5.1 ----
\subsection{Motivation}
\label{subsec:ba_motivation}

The rate-distortion function~(\ref{eq:rd_function}) is defined as a minimization
of mutual information over a convex set of conditional distributions.
For the Bernoulli source with Hamming distortion, we exploited the problem's
symmetry to obtain a closed-form solution.
However, for more complex sources, such as non-uniform discrete sources with
non-binary alphabets or non-Hamming distortion measures, the optimization
does not admit a closed-form solution, and a computational approach is needed.

The Blahut-Arimoto algorithm~\cite{blahut1972,arimoto1972} solves this
optimization through an elegant \emph{alternating minimization} procedure.
It was independently discovered by Blahut and Arimoto in 1972, and its
convergence was later established rigorously by Csisz\'{a}r~\cite{csiszar1974}.

% ---- 5.2 ----
\subsection{The Algorithm}
\label{subsec:ba_algorithm}

The Blahut-Arimoto algorithm operates on the Lagrangian dual formulation of
the rate-distortion problem.
For a given slope parameter $s > 0$ (corresponding to the Lagrange multiplier),
the algorithm minimizes the functional
\begin{equation}
\label{eq:ba_functional}
F(s) = \min_{p_{\hat{X}|X}} \bigl[ I(X; \hat{X}) + s \cdot \E[d(X, \hat{X})] \bigr].
\end{equation}
By sweeping $s$ over positive values, we trace out the entire $\RD$ curve.
(The slope $s$ here is in nats and relates to the Lagrange multiplier $\lambda^*$
from Section~\ref{subsec:optimal_test_channel} by $s = \lambda^* \ln 2$; the natural
exponential $e^{-s\,d}$ in Step~1 below is equivalent to the
$2^{-\lambda\,d}$ form used there.)

The algorithm alternates between two updates:

\smallskip
\noindent\textbf{Step 1: Update the test channel.}
Given the current reproduction distribution $p_{\hat{X}}(\hat{x})$, update
\begin{equation}
\label{eq:ba_step1}
p_{\hat{X}|X}(\hat{x}|x) = \frac{p_{\hat{X}}(\hat{x})\, e^{-s\, d(x,\hat{x})}}{Z(x)},
\end{equation}
where $Z(x) = \sum_{\hat{x}} p_{\hat{X}}(\hat{x})\, e^{-s\, d(x,\hat{x})}$ is
a normalization constant ensuring $\sum_{\hat{x}} p_{\hat{X}|X}(\hat{x}|x) = 1$.

\smallskip
\noindent\textbf{Step 2: Update the reproduction distribution.}
Given the updated test channel, compute
\begin{equation}
\label{eq:ba_step2}
p_{\hat{X}}(\hat{x}) = \sum_{x} p_X(x)\, p_{\hat{X}|X}(\hat{x}|x).
\end{equation}
This is simply the marginal of $\hat{X}$ induced by passing $X$ through the
updated test channel.

The algorithm is initialized with a uniform reproduction distribution
$p_{\hat{X}}(\hat{x}) = 1/|\calXhat|$ and iterates Steps~1 and~2 until
convergence.

\begin{remark}
The alternating structure of the algorithm has a natural interpretation:
Step~1 finds the best test channel for a fixed output distribution, while
Step~2 updates the output distribution to be consistent with the new test channel.
This is an instance of the classical alternating minimization framework, and
convergence is guaranteed because each step decreases the Lagrangian objective.
\end{remark}

\begin{remark}[Intuition for Step~1]
The update in~(\ref{eq:ba_step1}) has an appealing interpretation: for each source
symbol $x$, the algorithm reweights the current reproduction distribution
$p_{\hat{X}}(\hat{x})$ by the factor $e^{-s\,d(x,\hat{x})}$.
Reproduction symbols $\hat{x}$ that are close to $x$ (low distortion) receive a
weight near~$1$, while those that are far from $x$ (high distortion) are
exponentially suppressed.
The result is then normalized to form a valid conditional distribution.
This is precisely a \emph{multiplicative weights update}: rather than making
additive adjustments, the algorithm multiplies each probability by an
exponential penalty that depends on the ``cost'' $d(x,\hat{x})$.
Readers familiar with online learning will recognize the same mechanism at work
in the Hedge algorithm and its bandit variant Exp3, where actions are reweighted
by $e^{-\eta \cdot \text{loss}}$ to balance exploration and exploitation.
In the Blahut-Arimoto setting, the slope parameter $s$ plays the role of the
learning rate $\eta$: a larger $s$ imposes a harsher penalty on distortion,
concentrating the test channel on the closest reproduction symbols.
\end{remark}

Algorithm~\ref{alg:ba} presents the complete pseudocode.

\begin{algorithm}[t]
\caption{Blahut-Arimoto Algorithm}
\label{alg:ba}
\begin{algorithmic}[1]
\Require Source distribution $p_X$, distortion matrix $d$, slope $s > 0$, tolerance $\delta$
\Ensure Rate $R$ and distortion $D$ on the $\RD$ curve
\State Initialize $p_{\hat{X}}(\hat{x}) \gets 1/|\calXhat|$ for all $\hat{x}$
\Repeat
    \For{each $x \in \calX$}
        \State $Z(x) \gets \sum_{\hat{x}} p_{\hat{X}}(\hat{x})\, e^{-s\, d(x,\hat{x})}$
        \For{each $\hat{x} \in \calXhat$}
            \State $p_{\hat{X}|X}(\hat{x}|x) \gets p_{\hat{X}}(\hat{x})\, e^{-s\, d(x,\hat{x})} / Z(x)$
        \EndFor
    \EndFor
    \For{each $\hat{x} \in \calXhat$}
        \State $p_{\hat{X}}(\hat{x}) \gets \sum_{x} p_X(x)\, p_{\hat{X}|X}(\hat{x}|x)$
    \EndFor
    \State Compute $R \gets I(X; \hat{X})$ and $D \gets \E[d(X,\hat{X})]$
\Until{$|R_{\text{new}} - R_{\text{old}}| < \delta$}
\State \Return $(R, D)$
\end{algorithmic}
\end{algorithm}

% ---- 5.3 ----
\subsection{Application to the Bernoulli Source}
\label{subsec:ba_bernoulli}

We now apply the Blahut-Arimoto algorithm to the $\Ber(p)$ source with Hamming
distortion.
Since the source and reproduction alphabets are both $\{0, 1\}$, the test
channel is a $2 \times 2$ matrix, and all computations are explicit.

Consider a concrete example with $p = 0.3$ and slope parameter $s = 10$.
At each iteration, we maintain the reproduction distribution
$(p_{\hat{X}}(0), p_{\hat{X}}(1))$ and the test channel matrix.
The updates in~(\ref{eq:ba_step1}) and~(\ref{eq:ba_step2}) become:
\begin{align}
p_{\hat{X}|X}(0|0) &= \frac{p_{\hat{X}}(0)}{p_{\hat{X}}(0) + p_{\hat{X}}(1)\, e^{-s}}, \\
p_{\hat{X}|X}(0|1) &= \frac{p_{\hat{X}}(0)\, e^{-s}}{p_{\hat{X}}(0)\, e^{-s} + p_{\hat{X}}(1)}.
\end{align}
Since $e^{-s} = e^{-10} \approx 4.5 \times 10^{-5}$, the multiplicative weight
for any mismatched reconstruction ($d(x,\hat{x}) = 1$) is tiny compared to the
weight for a correct reconstruction ($d(x,\hat{x}) = 0$, weight~$1$).
For example, in the numerator of $p_{\hat{X}|X}(0|0)$, the term $p_{\hat{X}}(0)$
(correct match) appears with weight~$1$, while the denominator also includes
$p_{\hat{X}}(1)\,e^{-10}$ (mismatch), which is roughly $10^5$ times smaller.
The result is that $p_{\hat{X}|X}(0|0) \approx 1$ and $p_{\hat{X}|X}(1|0) \approx 0$:
the test channel is driven toward a near-identity matrix, corresponding to very
low distortion.

Figure~\ref{fig:ba_convergence} shows the convergence of the Blahut-Arimoto
algorithm for $p = 0.3$ and several values of the slope parameter $s$.
The algorithm converges rapidly, typically reaching machine precision within
$20$--$50$ iterations.
Larger values of $s$ correspond to lower target distortions and converge faster.

\begin{figure}[htbp]
    \centering
    \includegraphics[width=0.75\textwidth]{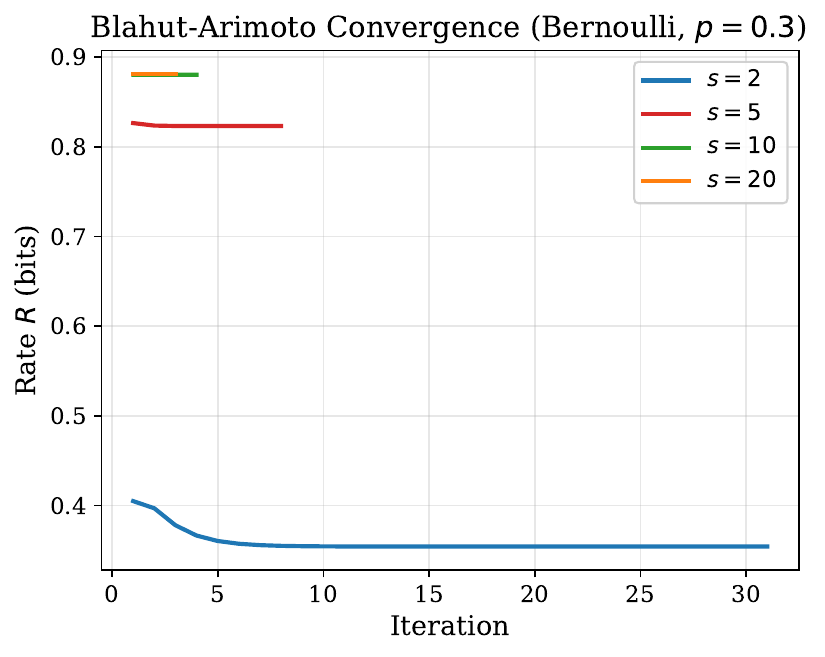}
    \caption{Convergence of the Blahut-Arimoto algorithm for $p = 0.3$ and slope
    parameters $s \in \{2, 5, 10, 20\}$.
    The rate converges monotonically to its final value within a few tens of iterations.}
    \label{fig:ba_convergence}
\end{figure}

To validate the algorithm, we sweep $s$ over a range of values and plot the
resulting $(D, R)$ pairs alongside the closed-form curve $\RD = \Hp - \HD$.
Figure~\ref{fig:ba_vs_closedform} confirms that the Blahut-Arimoto algorithm
recovers the exact rate-distortion function.

\begin{figure}[htbp]
    \centering
    \includegraphics[width=0.75\textwidth]{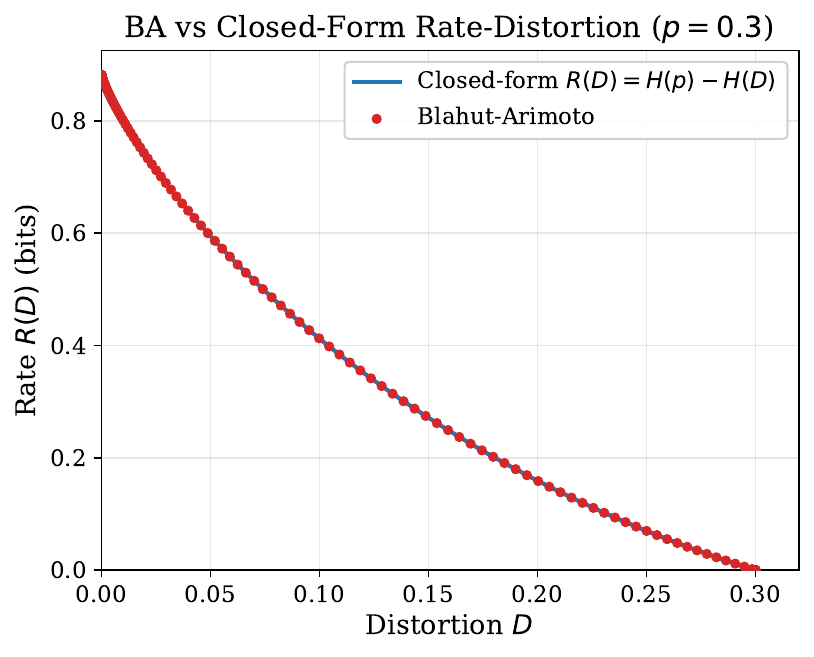}
    \caption{Comparison of the Blahut-Arimoto computed rate-distortion points
    (circles) with the closed-form curve $\RD = \Hp - \HD$ (solid line)
    for $p = 0.3$.
    The agreement is exact to numerical precision.}
    \label{fig:ba_vs_closedform}
\end{figure}

% ---- 5.4 ----
\subsection{A Concrete Worked Example}
\label{subsec:worked_example}

To bridge the gap between the abstract Blahut-Arimoto algorithm and the finite
block length theory of Section~\ref{sec:finite_blocklength}, we construct
a concrete $(n, M)$ code for the $\Ber(0.3)$ source.  With $n = 3$ and $M = 4$
codewords, the optimal codebook is $\mathcal{C} = \{000, 001, 010, 100\}$: all
sequences of Hamming weight~$0$ or~$1$.  These are the four most probable source
sequences, and a nearest-neighbor encoder maps each remaining sequence to the
closest codeword.  Figure~\ref{fig:codebook_example} displays the complete code.

\begin{figure}[htbp]
    \centering
    \includegraphics[width=\textwidth]{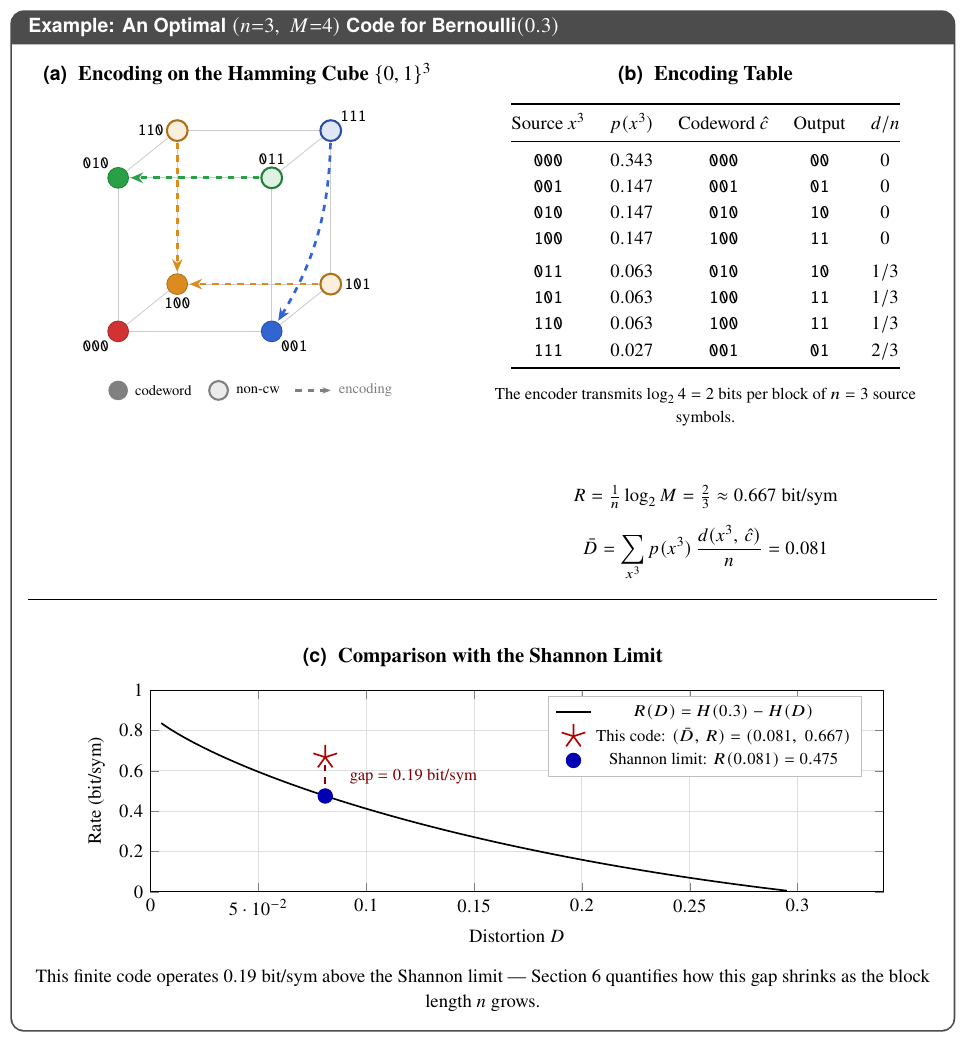}
    \caption{A worked example of an optimal $(n{=}3,\; M{=}4)$ lossy code for
    $\Ber(0.3)$ with Hamming distortion.
    \textsf{(a)}~The Hamming cube $\{0,1\}^3$ with codewords shown as filled
    circles and non-codewords as hollow circles; dashed arrows indicate the
    nearest-neighbor encoding map.
    \textsf{(b)}~The encoding table listing every source sequence, its
    probability, assigned codeword, and per-symbol distortion.
    \textsf{(c)}~The code's operating point $(\bar{D}, R) = (0.081, 0.667)$
    compared with the Shannon limit $R(0.081) = 0.475$.}
    \label{fig:codebook_example}
\end{figure}

The code achieves rate $R = \frac{2}{3} \approx 0.667$ bit/sym at average
distortion $\bar{D} = 0.081$, which is $0.19$ bit/sym above the Shannon limit
$R(0.081) = 0.475$.  This gap is the price of operating at finite block
length --- Section~\ref{sec:finite_blocklength} quantifies precisely how it
shrinks as $n$ grows.

% ---- 5.5 ----
\subsection{Historical Note}
\label{subsec:ba_history}

The algorithm was independently proposed by Blahut~\cite{blahut1972} and
Arimoto~\cite{arimoto1972} in 1972.
Blahut's formulation emphasized the Lagrangian dual structure, while Arimoto
approached the problem through an iterative projection method.
Csisz\'{a}r~\cite{csiszar1974} unified and extended both approaches using
his theory of $I$-divergence, establishing the alternating minimization
interpretation that clarifies why the iterations converge.
The Blahut-Arimoto algorithm remains the standard computational tool for
rate-distortion functions and channel capacities in information theory.

% ========================================================================
% SECTION 6: BEYOND THE ASYMPTOTIC LIMIT
% ========================================================================
\section{Beyond the Asymptotic Limit: Finite Block Length}
\label{sec:finite_blocklength}

In this section, we move beyond Shannon's asymptotic rate-distortion function
and develop the theory of finite block length lossy compression.
This is the mathematical core of the tutorial.

% ---- 6.1 ----
\subsection{The Gap Between Theory and Practice}
\label{subsec:gap}

The rate-distortion function $\RD$ tells us the ultimate limit of lossy
compression as the block length $n \to \infty$.
However, real systems operate with finite $n$.
A practical compression system might use blocks of $n = 100$ or $n = 1000$
symbols.
How much extra rate do we need compared to the Shannon limit?

Figure~\ref{fig:rate_vs_blocklength} illustrates the situation.
For a $\Ber(0.3)$ source with target distortion $D = 0.1$ and excess-distortion
probability $\varepsilon = 0.1$, the achievable rate at $n = 100$ is
significantly above $\RD$, but the gap narrows as $n$ grows.
Understanding the precise rate of this convergence is the goal of finite block
length theory.

\begin{figure}[htbp]
    \centering
    \includegraphics[width=0.75\textwidth]{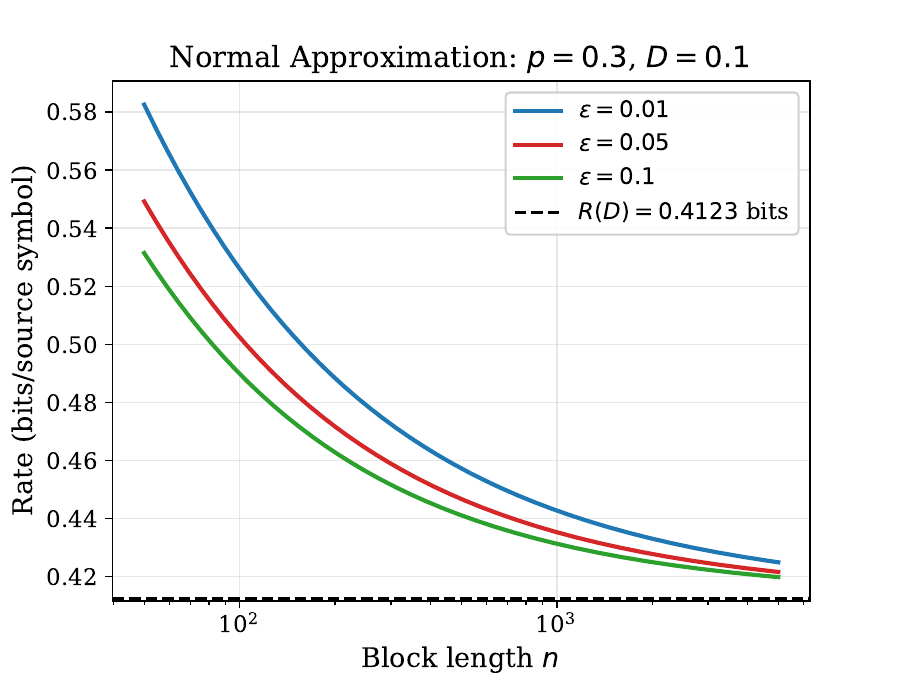}
    \caption{The minimum achievable rate $R(n, D, \varepsilon)$ versus block length
    $n$ for a $\Ber(0.3)$ source with $D = 0.1$ and several values of the
    excess-distortion probability $\varepsilon$.
    The horizontal dashed line shows the Shannon limit $\RD$.
    The gap decays as $O(1/\sqrt{n})$.}
    \label{fig:rate_vs_blocklength}
\end{figure}

% ---- 6.2 ----
\subsection{The Finite Block Length Setup}
\label{subsec:fbl_setup}

Before diving into the formal definitions, let us build some intuition for why
the finite block length setting requires a fundamentally different formulation
than the asymptotic one.

The asymptotic rate-distortion function $\RD$ is a deterministic quantity.
This is because, in the limit $n \to \infty$, the law of large numbers
guarantees that the empirical statistics of the source sequence $X^n$ concentrate
around their expected values.
Almost every source sequence ``looks the same'' statistically, so a single
codebook can handle all of them with average distortion at most $D$.
There is no randomness left to worry about.

At finite block length, the situation is different.
The source sequence $X^n$ is random, and different realizations can be
substantially easier or harder to compress.
For a $\Ber(0.3)$ source with $n = 100$, most sequences will have roughly
$30$ ones, but some will have $15$ or $45$.
A code designed for the ``typical'' case may fail badly on these atypical
sequences, producing a reconstruction whose distortion exceeds $D$.
In short, the distortion achieved by any finite-$n$ code is itself a random
variable, because it depends on which source sequence nature produces.

This means we cannot simply ask for the distortion to be at most $D$ with
certainty (that would require an enormous codebook) or merely on average (that
would hide the possibility of catastrophic failures on some sequences).
Instead, we allow a small \emph{failure probability} $\varepsilon$: we accept
that a fraction~$\varepsilon$ of source sequences may result in distortion
exceeding~$D$, but we require the code to succeed on the remaining
$1 - \varepsilon$ fraction.
The quantity $R(n, D, \varepsilon)$ is then deterministic again: it is the
minimum rate at which a code of block length $n$ exists that keeps the
failure probability below~$\varepsilon$.
One can think of $R(n, D, \varepsilon)$ as a \emph{confidence bound}: ``with
confidence $1 - \varepsilon$, a rate of $R(n, D, \varepsilon)$ bits per symbol
suffices to compress the source to distortion at most~$D$.''

We now formalize this precisely.

\begin{definition}[$(n, M, D, \varepsilon)$ Code]
An $(n, M, D, \varepsilon)$ lossy source code consists of an encoder
$f_n : \calX^n \to \{1, \ldots, M\}$ and a decoder
$g_n : \{1, \ldots, M\} \to \calXhat^n$ such that the
\emph{excess-distortion probability} satisfies
\begin{equation}
\label{eq:excess_distortion}
\Prob\bigl(d(X^n, g_n(f_n(X^n))) > D\bigr) \leq \varepsilon.
\end{equation}
\end{definition}

Note the shift from the asymptotic setting.
Instead of requiring the \emph{average} distortion to be at most $D$, we
require that the distortion exceeds $D$ with probability at most $\varepsilon$.
This excess-distortion formulation is more natural for finite block lengths and
leads to cleaner second-order results.

The \emph{minimum achievable rate} at block length $n$ is
\begin{equation}
\label{eq:Rn}
R(n, D, \varepsilon) = \frac{1}{n} \log_2 M^*(n, D, \varepsilon),
\end{equation}
where $M^*(n, D, \varepsilon)$ is the smallest codebook size $M$ for which an
$(n, M, D, \varepsilon)$ code exists.
The fundamental result of finite block length theory is a precise
characterization of $R(n, D, \varepsilon)$.

% ---- 6.3 ----
\subsection{A Type-Based Heuristic Route}
\label{subsec:type_derivation}

Before giving the rigorous $d$-tilted-information analysis, we present a
Bernoulli-specific heuristic based on types that correctly predicts the normal
approximation and dispersion~\cite{ingber2011}.
This approach uses the method of types and the central limit theorem;
it is not a proof, but it gives a transparent and intuitive account of why the
dispersion takes the form it does.

\paragraph{The type summarizes the block.}
A source sequence $x^n \in \{0,1\}^n$ drawn i.i.d.\ from $\Ber(p)$ has
$k = \sum_{i=1}^{n} x_i$ ones.
The integer $k$ determines the \emph{type} (empirical distribution)
$\hat{p}_n = k/n$, and the \emph{type class}
$T_k = \{x^n \in \{0,1\}^n : \sum_i x_i = k\}$ has size $\binom{n}{k}$.
The sufficient statistic for compression is just the count
$K = \sum_i X_i \sim \mathrm{Bin}(n, p)$.

\paragraph{Rate needed for a given type.}
For a sequence of type $k/n$, the rate-distortion function applied to the
empirical distribution gives a ``local'' rate
\begin{equation}
\label{eq:local_rate}
R(k/n,\, D) \;=\; H(k/n) - \HD,
\end{equation}
provided $k/n \geq D$ (otherwise $R = 0$).
At the exponential scale, covering all $\binom{n}{k} \approx 2^{nH(k/n)}$
sequences in $T_k$ by Hamming balls of radius $nD$ (each of size roughly
$2^{n\HD}$) requires about $2^{n[H(k/n) - \HD]}$ codewords.

\paragraph{Excess-distortion as a type fluctuation.}
If we use a codebook of rate $R$, it suffices for type class $T_k$ whenever
$H(k/n) - \HD \leq R$.
A code fails when the source sequence has ``too many'' or ``too few'' ones---that
is, when the empirical entropy $H(\hat{p}_n)$ is too large.
The excess-distortion event therefore reduces to a fluctuation of the type:
\[
\Prob\!\bigl(d(X^n, g_n(f_n(X^n))) > D\bigr)
\;\approx\;
\Prob\!\bigl(H(\hat{p}_n) - \HD > R\bigr).
\]

\paragraph{Exact binomial formulation.}
Define $g_n(k) = H(k/n) - \HD$ for $k = 0, 1, \ldots, n$.
The excess-distortion probability at rate $R$ is then
\begin{equation}
\label{eq:type_excess}
\Prob\!\bigl(g_n(K) > R\bigr)
\;=\; \sum_{\substack{k=0 \\[1pt] g_n(k) > R}}^{n}
\binom{n}{k}\, p^k (1-p)^{n-k}.
\end{equation}
The \emph{type-based optimal rate} is
\[
R_{\mathrm{type}}(n, D, \varepsilon)
\;=\; \inf\!\Bigl\{R \geq 0 :
\Prob\!\bigl(g_n(K) > R\bigr) \leq \varepsilon\Bigr\}.
\]
This is the exact binomial-quantile version of the finite block length
rate-distortion problem for the Bernoulli source.

\paragraph{Gaussian approximation via the delta method.}
By the central limit theorem,
$\hat{p}_n = K/n \approx \mathcal{N}\!\bigl(p,\; p(1-p)/n\bigr)$
for large $n$.
The function $g(\hat{p}) = H(\hat{p}) - \HD$ is smooth in $\hat{p}$, so the
delta method gives
\[
g(\hat{p}_n) \;\approx\;
\mathcal{N}\!\Bigl(g(p),\;
\bigl[g'(p)\bigr]^2 \,\frac{p(1-p)}{n}\Bigr).
\]
The derivative of the binary entropy is
$g'(\hat{p}) = H'(\hat{p}) = \log_2\!\frac{1-\hat{p}}{\hat{p}}$,
so $g'(p) = \log_2\!\frac{1-p}{p}$ and
\[
\Var\!\bigl(g(\hat{p}_n)\bigr)
\;\approx\;
\frac{1}{n}\,p(1-p)\,\Bigl[\log_2\frac{1-p}{p}\Bigr]^{\!2}
\;=\; \frac{V}{n},
\]
where
\begin{equation}
\label{eq:V_type}
V \;=\; p(1-p)\,\Bigl[\log_2\frac{1-p}{p}\Bigr]^{\!2}.
\end{equation}
Setting
$\Prob\!\bigl(g(\hat{p}_n) > R\bigr) = \varepsilon$ and inverting the
Gaussian tail gives the normal approximation
\[
R(n, D, \varepsilon)
\;\approx\;
\RD + \sqrt{\frac{V}{n}}\;\Qinv(\varepsilon)
+ o\!\left(\frac{1}{\sqrt{n}}\right).
\]

\paragraph{$V$ does not depend on $D$.}
Equation~(\ref{eq:V_type}) is striking: the dispersion $V$ depends on the
source parameter $p$ but \emph{not} on the target distortion $D$.
The reason is immediate from the type viewpoint: the randomness in
$g(\hat{p}_n) = H(\hat{p}_n) - \HD$ comes entirely from $H(\hat{p}_n)$;
the term $-\HD$ is a deterministic shift that affects the mean but not the
variance.
This $D$-independence is special to the Bernoulli source with Hamming
distortion and does not hold for general sources.

\paragraph{Caveat.}
The type-based argument above treats the covering number
$|T_k|/|B(x, nD)|$ as exact;
in reality, this ratio is only a first-order (exponential-scale) approximation
to the true codebook size.
Consequently, this derivation is a heuristic sketch, not a rigorous
proof---the rigorous bound requires the $d$-tilted information framework
developed in the next subsection.

\paragraph{Transition.}
The type-based route works beautifully for the Bernoulli source because the
empirical entropy $H(\hat{p}_n)$ captures all the relevant randomness.
For general sources and distortion measures, there is no equally simple
sufficient statistic, and the type argument breaks down.
The \emph{$d$-tilted information}, introduced next, provides the
generalization: it reduces to $H(\hat{p}_n) - \HD$ for the Bernoulli case but
applies to arbitrary memoryless sources.
For Bernoulli/Hamming, however, the type-based fluctuation picture is not
merely motivational: the $d$-tilted information is exactly
$\jmath_X(x, D) = H(\hat{p}_n) - \HD$ evaluated at the source realization, so
the heuristic argument closely mirrors the structure behind the rigorous proof.

% ---- 6.4 ----
\subsection{The $d$-Tilted Information}
\label{subsec:dtilted}

The central single-letter quantity in the finite block length analysis is the
\emph{$d$-tilted information}, introduced by Kostina and
Verd\'{u}~\cite{kostina2012}.

\begin{definition}[$d$-Tilted Information {\cite{kostina2012}}]
\label{def:dtilted}
For a source $X$ with distribution $p_X$, distortion measure $d$, and target
distortion $D$, the $d$-tilted information of a source realization $x$ is
\begin{equation}
\label{eq:dtilted}
\jmath_X(x, D) = D_{\mathrm{KL}}\!\bigl(p^*_{\hat{X}|X}(\cdot|x) \,\big\|\, Q^*\bigr)
+ \lambda^*\!\bigl(\E[d(x,\hat{X}) \mid X = x] - D\bigr),
\end{equation}
where $\lambda^* = \log_2\frac{1-D}{D}$ is the optimal Lagrange multiplier~(\ref{eq:lambda_star}),
$Q^*$ is the optimal reproduction distribution~(\ref{eq:repro_dist_0})--(\ref{eq:repro_dist_1}),
and $p^*_{\hat{X}|X}$ is the optimal forward channel~(\ref{eq:optimal_test_channel}).
All quantities are measured in bits (using $\log_2$).
Equivalently, using the Gibbs form $p^*_{\hat{X}|X}(\hat{x}|x)
= Q^*(\hat{x})\,2^{-\lambda^* d(x,\hat{x})}/Z(x)$ with normalizing constant
$Z(x) = \sum_{\hat{x}} Q^*(\hat{x})\,2^{-\lambda^* d(x,\hat{x})}$,
the definition simplifies to
\begin{equation}
\label{eq:dtilted_bits}
\jmath_X(x, D) = -D\log_2\frac{1-D}{D}
+ \log_2 \frac{1}{Z(x)}.
\end{equation}
\end{definition}

The $d$-tilted information has a compelling interpretation: it measures how
``difficult'' it is to compress a particular source realization $x$ to distortion
level $D$.
Different source symbols may be easier or harder to compress, and $\jmath_X(x, D)$
captures this variation.

A key property is that the expected $d$-tilted information equals the
rate-distortion function:
\begin{equation}
\label{eq:dtilted_mean}
\E[\jmath_X(X, D)] = \RD,
\end{equation}
where the expectation is over the source distribution:
$\E[\jmath_X(X, D)] = \sum_{x} p_X(x)\,\jmath_X(x, D)
= (1-p)\,\jmath_X(0, D) + p\,\jmath_X(1, D)$ for the Bernoulli source.

\begin{proof}
We give two proofs: an explicit algebraic computation for the Bernoulli source,
and a short information-theoretic argument that works in general.

\smallskip
\noindent\textbf{Part A: Algebraic proof for the Bernoulli source.}
We use the equivalent form~(\ref{eq:dtilted_bits}).
The normalizing constant is $Z(x) = \sum_{\hat{x}} Q^*(\hat{x})\,
2^{-\lambda^* d(x,\hat{x})}$ with $Q^*$ from~(\ref{eq:Qstar})
and $\lambda^* = \log_2\frac{1-D}{D}$, so $2^{-\lambda^*} = \frac{D}{1-D}$.

\smallskip
\noindent\emph{Step 1: Compute $Z(0)$ and $Z(1)$.}
\begin{align}
Z(0) &= Q^*(0)\cdot 1 + Q^*(1)\cdot\frac{D}{1-D}
= \frac{Q^*(0)(1-D) + Q^*(1)\,D}{1-D}. \notag
\end{align}
Substituting $Q^*(0) = \frac{1-p-D}{1-2D}$ and $Q^*(1) = \frac{p-D}{1-2D}$:
\[
Q^*(0)(1-D) + Q^*(1)\,D
= \frac{(1-p-D)(1-D) + (p-D)D}{1-2D}.
\]
Expanding the numerator:
$(1-p-D)(1-D) + (p-D)D = (1-p)(1-D) - D(1-D) + pD - D^2
= (1-p)(1-D) - D(1-p) = (1-p)(1-2D)$.
Therefore
\begin{equation}
\label{eq:Z0Z1}
Z(0) = \frac{1-p}{1-D}, \qquad
Z(1) = \frac{p}{1-D},
\end{equation}
where $Z(1)$ follows by an identical calculation (or by the symmetry
$Z(1) = Q^*(0)\frac{D}{1-D} + Q^*(1)$, which gives numerator $p(1-2D)$).

\smallskip
\noindent\emph{Step 2: Compute $\jmath_X(0,D)$ and $\jmath_X(1,D)$.}
Plugging~(\ref{eq:Z0Z1}) into~(\ref{eq:dtilted_bits}):
\begin{align}
\jmath_X(0, D) &= -D\log_2\frac{1-D}{D} + \log_2\frac{1-D}{1-p},
\label{eq:jx0_simplified}\\
\jmath_X(1, D) &= -D\log_2\frac{1-D}{D} + \log_2\frac{1-D}{p}.
\label{eq:jx1_simplified}
\end{align}

\smallskip
\noindent\emph{Step 3: Compute the expectation.}
\begin{align}
\E[\jmath_X(X,D)]
&= (1-p)\,\jmath_X(0,D) + p\,\jmath_X(1,D) \notag\\
&= \underbrace{-D\log_2\frac{1-D}{D}}_{\text{common first term}}
   + \underbrace{(1-p)\log_2\frac{1-D}{1-p} + p\log_2\frac{1-D}{p}}_{\text{weighted second terms}}.
   \label{eq:Ej_expanded}
\end{align}

\smallskip
\noindent\emph{Step 4: Simplify the second group.}
Splitting the logarithms:
\[
(1-p)\log_2\frac{1-D}{1-p} + p\log_2\frac{1-D}{p}
= \underbrace{\bigl[(1-p) + p\bigr]}_{=\,1}\log_2(1-D)
  \;+\; \underbrace{\bigl[-(1-p)\log_2(1-p) - p\log_2 p\bigr]}_{=\,H(p)}.
\]

\smallskip
\noindent\emph{Step 5: Combine all terms.}
\begin{align}
\E[\jmath_X(X,D)]
&= -D\log_2\frac{1-D}{D} + \log_2(1-D) + H(p) \notag\\
&= -D\log_2(1-D) + D\log_2 D + \log_2(1-D) + H(p) \notag\\
&= (1-D)\log_2(1-D) + D\log_2 D + H(p) \notag\\
&= -H(D) + H(p) = H(p) - H(D) = \RD. \qquad\square \notag
\end{align}

\smallskip
\noindent\textbf{Part B: Information-theoretic proof (general sources).}
From the definition~(\ref{eq:dtilted}), taking the expectation over $X$:
\begin{align}
\E[\jmath_X(X,D)]
&= \E_X\!\bigl[D_{\mathrm{KL}}(p^*_{\hat{X}|X}(\cdot|X) \| Q^*)\bigr]
   + \lambda^*\!\bigl(\E[d(X,\hat{X}^*)] - D\bigr). \notag
\end{align}
The first term is $\sum_x p_X(x)\,D_{\mathrm{KL}}(p^*_{\hat{X}|X}(\cdot|x) \| Q^*)
= I(X;\hat{X}^*)$, since mutual information decomposes as the expected
KL divergence of the conditional from the marginal.
The second term vanishes because the optimal test channel meets the distortion
constraint with equality: $\E[d(X,\hat{X}^*)] = D$.
Therefore
\[
\E[\jmath_X(X,D)] = I(X;\hat{X}^*) = \RD. \qquad\square
\]

For the Bernoulli source, the accompanying Python code confirms that
$(1-p)\,\jmath_X(0,D) + p\,\jmath_X(1,D) = \Hp - \HD$
to machine precision for all tested values of $p$ and $D$.
\end{proof}

This identity confirms that $\jmath_X(x, D)$ is the correct ``information density''
for the lossy compression problem: it decomposes the rate $\RD$ into per-symbol
contributions, just as the log-likelihood ratio decomposes mutual information
in channel coding.

For the $\Ber(p)$ source with Hamming distortion, the optimal reproduction
distribution is given
by~(\ref{eq:repro_dist_0})--(\ref{eq:repro_dist_1}), and the Lagrange
multiplier is $\lambda^* = \log_2\frac{1-D}{D}$.
Using the simplified normalizing constants $Z(0) = \frac{1-p}{1-D}$
and $Z(1) = \frac{p}{1-D}$ from~(\ref{eq:Z0Z1}), the $d$-tilted information
takes two values:
\begin{align}
\jmath_X(0, D) &= -D\log_2\frac{1-D}{D} + \log_2\frac{1-D}{1-p}, \label{eq:jx0}\\
\jmath_X(1, D) &= -D\log_2\frac{1-D}{D} + \log_2\frac{1-D}{p}. \label{eq:jx1}
\end{align}

Figure~\ref{fig:dtilted} shows the $d$-tilted information for both source
symbols as a function of $D$.

\begin{figure}[htbp]
    \centering
    \includegraphics[width=0.75\textwidth]{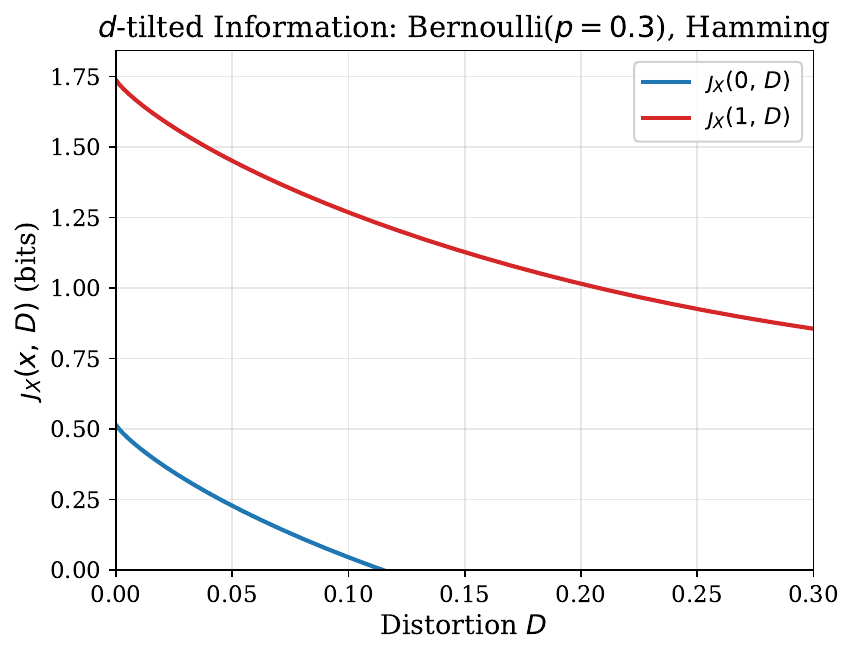}
    \caption{The $d$-tilted information $\jmath_X(0, D)$ and $\jmath_X(1, D)$
    for a $\Ber(0.3)$ source with Hamming distortion.
    When $D$ is small, both values are close to $\Hp$ (the lossless rate).
    As $D$ increases toward $\min(p, 1-p) = 0.3$, both converge to zero.
    The gap between the two curves reflects the asymmetry of the source.}
    \label{fig:dtilted}
\end{figure}

% ---- 6.5 ----
\subsection{Dispersion: The Key Second-Order Quantity}
\label{subsec:dispersion}

The rate-distortion function $\RD$ is a first-order quantity: it captures the
leading-order behavior as $n \to \infty$.
The \emph{rate-distortion dispersion} is the second-order quantity that governs
how quickly the finite block length rate converges to $\RD$.

\begin{definition}[Rate-Distortion Dispersion]
\label{def:dispersion}
The rate-distortion dispersion at distortion level $D$ is
\begin{equation}
\label{eq:dispersion}
V(D) = \Var[\jmath_X(X, D)].
\end{equation}
\end{definition}

The dispersion measures how \emph{variable} the compression difficulty is across
source symbols.
If all source symbols are equally difficult to compress, then $V(D) = 0$ and
the convergence to $\RD$ is faster than $1/\sqrt{n}$.
If different symbols have very different compression difficulties, then $V(D)$
is large and the $1/\sqrt{n}$ penalty is more pronounced.

For the $\Ber(p)$ source, since $X$ takes only two values, we have
\begin{equation}
\label{eq:dispersion_bernoulli}
V(D) = p(1-p)\bigl(\jmath_X(1, D) - \jmath_X(0, D)\bigr)^2.
\end{equation}
This is simply the variance of a Bernoulli random variable that takes value
$\jmath_X(0, D)$ with probability $1-p$ and $\jmath_X(1, D)$ with probability $p$.

\begin{remark}
When $p = 1/2$, the source is symmetric: $\jmath_X(0, D) = \jmath_X(1, D)$
for all $D$, and therefore $V(D) = 0$.
Intuitively, every symbol of a fair coin is equally difficult to compress, so
there is no variability in compression difficulty.
This is a somewhat surprising consequence: for the fair Bernoulli source, the
convergence to $\RD$ is faster than $O(1/\sqrt{n})$.
When $V(D) = 0$, the normal approximation~(\ref{eq:normal_approx}) does not
apply (it requires $V(D) > 0$); the $\sqrt{V/n}$ term vanishes and the
$O(\log n / n)$ remainder becomes the dominant correction.
\end{remark}

Figure~\ref{fig:dispersion} shows $V(D)$ as a function of $D$ for several
source biases.

\begin{figure}[htbp]
    \centering
    \includegraphics[width=0.75\textwidth]{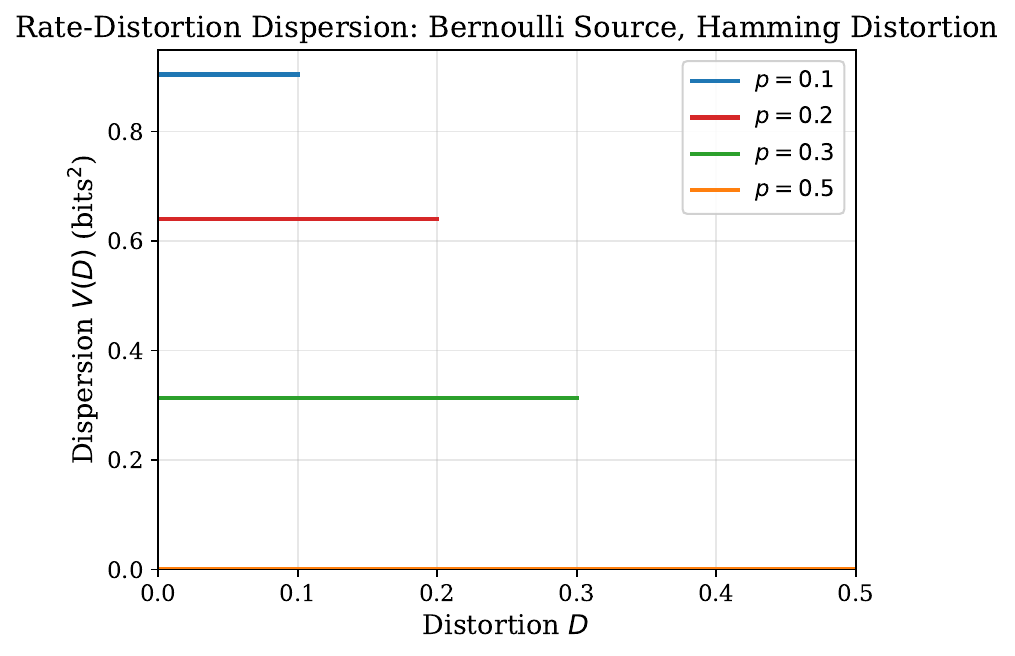}
    \caption{The rate-distortion dispersion $V(D)$ versus distortion $D$ for a
    $\Ber(p)$ source with $p \in \{0.1, 0.2, 0.3, 0.5\}$.
    For $p = 0.5$, the dispersion is identically zero (the source symbols are
    equally difficult to compress).
    For biased sources, the dispersion is largest at intermediate distortion levels.}
    \label{fig:dispersion}
\end{figure}

% ---- 6.6 ----
\subsection{The Normal Approximation}
\label{subsec:normal_approx}

We now state the central result of finite block length rate-distortion theory.
The minimum achievable rate at block length $n$ is characterized by the
following asymptotic expansion.

\begin{theorem}[Normal Approximation~\cite{kostina2012}]
\label{thm:normal_approx}
For a discrete memoryless source with rate-distortion function $\RD$ and
dispersion $V(D) > 0$, the minimum rate at block length $n$ and
excess-distortion probability $\varepsilon \in (0, 1)$ satisfies
\begin{equation}
\label{eq:normal_approx}
R(n, D, \varepsilon) = \RD + \sqrt{\frac{V(D)}{n}}\, \Qinv(\varepsilon)
+ O\!\left(\frac{\log n}{n}\right),
\end{equation}
where $\Qinv(\varepsilon)$ is the inverse of the Gaussian $Q$-function,
$Q(x) = \frac{1}{\sqrt{2\pi}} \int_x^{\infty} e^{-t^2/2}\, dt$.
\end{theorem}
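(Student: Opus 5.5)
The proof splits into a converse (lower bound on $R(n,D,\varepsilon)$) and an achievability bound (upper bound on $R(n,D,\varepsilon)$). Both sides reduce to Gaussian approximations of the sum $S_n=\sum_{i=1}^n \jmath_X(X_i,D)$. This sum has mean $n\RD$ by~(\ref{eq:dtilted_mean}) and variance $nV(D)$ by Definition~\ref{def:dispersion}. Because the source is discrete, $\jmath_X(X_i,D)$ is bounded, so its third absolute moment is finite. The Berry--Esseen theorem therefore gives
\[
\Bigl|\Prob\bigl(S_n > n\RD + \sqrt{nV(D)}\,t\bigr) - Q(t)\Bigr| \le \frac{B}{\sqrt n}
\]
uniformly in $t$. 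Once each side is written as ``$\log_2 M \gtrless S_n$ plus an $O(\log n)$ slack, except with probability $\varepsilon \pm O(1/\sqrt n)$,'' we solve for $\log_2 M$. A Taylor expansion of $\Qinv$ around $\varepsilon$ converts the $O(1/\sqrt n)$ probability perturbation into an $O(1/n)$ rate perturbation, since $\Qinv$ is smooth on $(0,1)$. Dividing by $n$ then yields~(\ref{eq:normal_approx}).

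\textbf{Converse.} I would use the Kostina--Verd\'u one-shot bound: any $(M,D,\varepsilon)$ code satisfies
\[
\varepsilon \ge \Prob\bigl(\jmath_{X^n}(X^n,D) \ge \log_2 M + \gamma\bigr) - 2^{-\gamma}
\]
for every $\gamma>0$. Its proof is short and I would include it. One uses the property that for every reproduction $\hat x^n$,
\[
\E\bigl[2^{\jmath(X^n,D)+\lambda^*(D - d(X^n,\hat x^n))}\bigr]\le 1,
\]
which follows from the KKT/Gibbs characterization in Section~\ref{subsec:optimal_test_channel}. One then sums over the $M$ codewords, restricting to the event that the distortion is at most $D$, and applies a Markov-type bound. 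For a memoryless source, $\jmath_{X^n}(x^n,D)=\sum_i \jmath_X(x_i,D)$, because $Q^*$ and $\lambda^*$ tensorize. Choosing $\gamma=\tfrac12\log_2 n$ and applying Berry--Esseen gives
\[
\log_2 M \ge n\RD+\sqrt{nV(D)}\,\Qinv(\varepsilon)-O(\log n).
\]

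\textbf{Achievability.} I would use random coding: draw $M$ codewords i.i.d.\ from $Q^{*\otimes n}$ and encode by minimum distortion. The excess-distortion probability is then
\[
\E\Bigl[\bigl(1-Q^{*n}(B_D(X^n))\bigr)^M\Bigr]\le \E\Bigl[e^{-M\,Q^{*n}(B_D(X^n))}\Bigr],
\]
where $B_D(x^n)$ is the Hamming distortion ball of radius $nD$ around $x^n$.

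\textbf{Main obstacle.} The key step is a lower bound on the probability of that ball, which I expect to be the hardest part:
\[
\log_2\frac{1}{Q^{*n}(B_D(x^n))}\le \jmath_{X^n}(x^n,D)+O(\log n)
\]
for all $x^n$ outside a set of probability $O(1/\sqrt n)$. For Bernoulli/Hamming this can be done by hand using types, as in Section~\ref{subsec:type_derivation}. Given $x^n$ of type $k/n$, the ball contains the sequences obtained by flipping exactly $\approx nD\cdot p^*_{X|\hat X}$-proportioned sets of positions. A Stirling estimate of this multinomial shows the ball's probability is $2^{-n[H(k/n)-\HD]}\cdot\mathrm{poly}(n)^{-1}$. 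The exponent $H(k/n)-\HD$ differs from $\frac1n\jmath_{X^n}(x^n,D)$ only by $O((k/n-p)^2)$, and this is $O(\log n/n)$ on the typical set $|k/n-p|\le\sqrt{\log n/n}$.

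The general-DMS case instead needs a tilted change of measure plus a local CLT for the conditional distortion, which is the genuinely delicate part of~\cite{kostina2012}. Here I would just cite it. Finally, taking
\[
\log_2 M=n\RD+\sqrt{nV(D)}\,\Qinv\!\bigl(\varepsilon-\tfrac{c}{\sqrt n}\bigr)+C\log n
\]
makes $e^{-M\cdot\text{ball}}\le 1/\sqrt n$ on the good event. The union with the exceptional sets costs $O(1/\sqrt n)$ in probability, hence $O(\log n/n)$ in rate, completing the bound.
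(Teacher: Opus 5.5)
Your proposal is correct in outline, and it takes a genuinely different route from the paper. The paper does not prove Theorem~\ref{thm:normal_approx} at all. It cites~\cite{kostina2012} and gives two heuristics: the Berry--Esseen intuition that $\sum_i\jmath_X(X_i,D)$ is roughly $\mathcal{N}(n\RD,nV(D))$, and the type-based delta-method argument of Section~\ref{subsec:type_derivation}. It explicitly disclaims the latter as a proof, because $|T_k|/|B(x,nD)|$ is only correct to exponential order.

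You instead rebuild the actual Kostina--Verd\'u argument:
\begin{itemize}
\item a one-shot converse from $\E\bigl[2^{\jmath_X(X,D)+\lambda^*(D-d(X,\hat x))}\bigr]\le 1$; for Bernoulli this is an equality for both $\hat x$, being just $\sum_x p^*_{X|\hat X}(x|\hat x)=1$;
\item random-coding achievability reduced to a lower bound on $Q^{*n}(B_D(x^n))$, which you prove by hand with types.
\end{itemize}
This supplies exactly what the paper's caveat says is missing: a quantitative statement that the covering-number error costs only $O(\log n)$ in $\log_2 M$, off a set of probability $O(1/\sqrt n)$. So the type heuristic becomes a proof for Bernoulli/Hamming. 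Your converse extends to any DMS once the general KKT inequality is invoked, and only general achievability still rests on the citation, as in the paper. The paper's route buys brevity and transparency, for instance the one-line reason why $V$ does not depend on $D$.

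There are three details to repair when you write it out.
\begin{enumerate}
\item With the paper's normalized distortion, the $n$-letter multiplier is $n\lambda^*$. Your exponent should therefore read $\jmath_{X^n}(X^n,D)+n\lambda^*\bigl(D-d(X^n,\hat x^n)\bigr)$.
\item The flips should be split according to the joint law, rather than by $p^*_{X|\hat X}$: flip about $nD\,Q^*(1)$ zeros and $nD\,Q^*(0)$ ones of $x^n$, with the total at most $\lfloor nD\rfloor$.
\item The estimate $Q^{*n}(B_D(x^n))\approx 2^{-n[H(k/n)-\HD]}$ cannot hold for all $k$.
\begin{itemize}
\item The Gibbs normalizer gives the Chernoff bound $Q^{*n}(B_D(x^n))\le 2^{n\lambda^*D}\prod_i Z(x_i)=2^{-\jmath_{X^n}(x^n,D)}$ for every $x^n$.
\item A direct computation gives $H(k/n)-\HD=\frac{1}{n}\jmath_{X^n}(x^n,D)-D_{\mathrm{KL}}\bigl(\Ber(k/n)\,\|\,\Ber(p)\bigr)$.
\item Since $Q^*$ is matched to $p$ rather than $k/n$, the true exponent of your flip set lies \emph{above} $\frac{1}{n}\jmath_{X^n}$, not below it.
\item That exponent is a smooth function of $k/n$ that touches $\frac{1}{n}\jmath_{X^n}$ at $k/n=p$, so the gap is $O((k/n-p)^2)$.
\end{itemize}
Your conclusion therefore holds, but only on the typical set. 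That is why the restriction to the typical set is essential rather than cosmetic.
\end{enumerate}
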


The intuition behind this result comes from the Berry-Esseen central limit
theorem.
The total compression cost for an i.i.d.\ source sequence $X^n$ is
approximately $\sum_{i=1}^n \jmath_X(X_i, D)$, a sum of i.i.d.\ random
variables with mean $n\RD$ and variance $nV(D)$.
By the CLT, this sum is approximately Gaussian, and the excess-distortion
probability translates into a Gaussian tail probability.
The $\Qinv(\varepsilon)$ factor converts the target probability $\varepsilon$
into the number of standard deviations we must accommodate.

Figure~\ref{fig:clt_histogram} illustrates this Gaussian approximation
concretely for a $\Ber(0.3)$ source with $n = 6$ and
$\varepsilon = 0.05$.
For each block of $n = 6$ symbols, the number of ones $k$ ranges from $0$
to $6$, and the per-symbol average $d$-tilted information
$\frac{1}{n}\sum_{i=1}^{n} \jmath_X(X_i, D)$ takes the value
$\frac{k\,\jmath_X(1, D) + (n-k)\,\jmath_X(0, D)}{n}$
with probability $\binom{n}{k}\, p^k (1-p)^{n-k}$.
The blue bars show this exact discrete distribution, while the red curve
shows the Gaussian density $\mathcal{N}\!\bigl(\RD,\, V(D)/n\bigr)$.
The dashed black line marks the Shannon limit $\RD$, and the
dash-dotted green line marks $R(n, D, \varepsilon)$: the minimum rate at
which the Gaussian tail probability (shaded) does not exceed $\varepsilon$.
Comparing the two panels reveals the effect of the distortion target:
at $D = 0.1$ (left), both $\RD$ and $V(D)$ are larger, so the distribution
is wider and the gap $R(n, D, \varepsilon) - \RD$ is more pronounced;
at $D = 0.2$ (right), the distribution is more concentrated and the
finite block length penalty is smaller.

\begin{figure}[htbp]
    \centering
    \includegraphics[width=\textwidth]{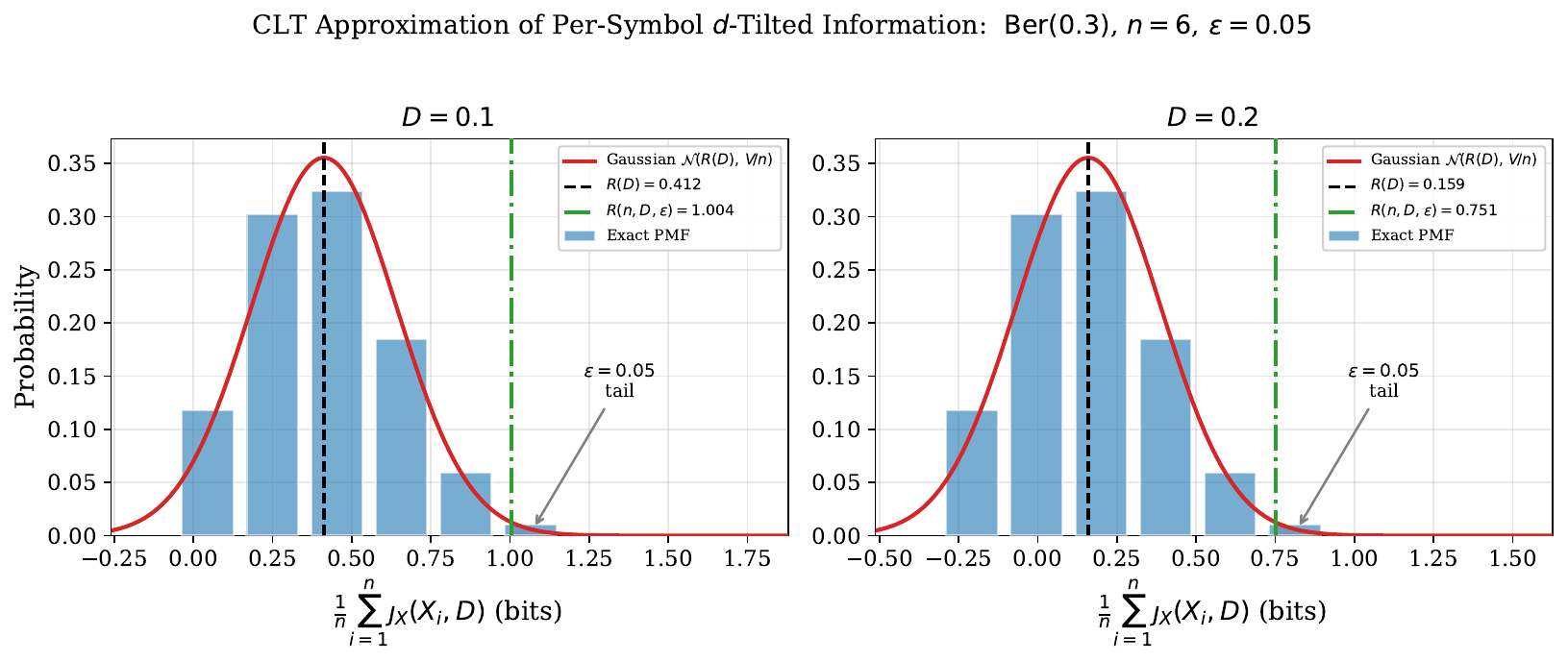}
    \caption{Gaussian approximation of the per-symbol $d$-tilted information
    for a $\Ber(0.3)$ source with $n = 6$ and $\varepsilon = 0.05$, at two
    distortion levels.
    Blue bars show the exact PMF over the $n+1 = 7$ possible values;
    the red curve is the Gaussian $\mathcal{N}(\RD,\, V(D)/n)$.
    The dashed line is $\RD$ and the dash-dotted line is $R(n, D, \varepsilon)$;
    the shaded tail corresponds to the excess-distortion probability
    $\varepsilon$.
    At $D = 0.1$ (left), higher dispersion produces a wider distribution
    and a larger finite block length penalty; at $D = 0.2$ (right), the
    distribution is tighter and the penalty smaller.}
    \label{fig:clt_histogram}
\end{figure}

From an engineering standpoint, Theorem~\ref{thm:normal_approx} provides a
practical design rule: to operate within rate $\RD + \Delta R$ of the Shannon
limit with excess-distortion probability $\varepsilon$, we need a block length of
approximately
\begin{equation}
\label{eq:blocklength_design}
n \approx \frac{V(D) \bigl(\Qinv(\varepsilon)\bigr)^2}{(\Delta R)^2}.
\end{equation}
This expression reveals the fundamental trade-off among block length, rate
overhead, distortion, and reliability.

Figure~\ref{fig:finite_blocklength} shows the achievability bound, converse
bound, and normal approximation for a $\Ber(0.3)$ source.

\begin{figure}[htbp]
    \centering
    \includegraphics[width=0.75\textwidth]{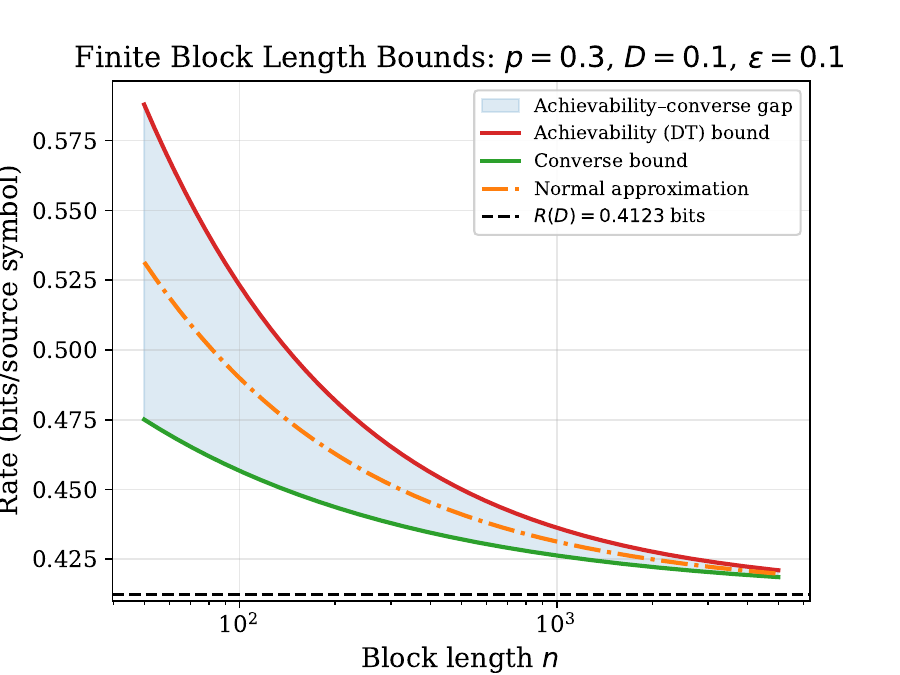}
    \caption{Finite block length bounds for a $\Ber(0.3)$ source with $D = 0.1$
    and $\varepsilon = 0.1$.
    The shaded region between the achievability bound (upper) and converse bound
    (lower) contains the true minimum rate $R(n, D, \varepsilon)$.
    The normal approximation (dashed) lies within this region.
    The horizontal line shows the Shannon limit $\RD$.}
    \label{fig:finite_blocklength}
\end{figure}

% ---- 6.7 ----
\subsection{Historical Note}
\label{subsec:fbl_history}

The study of finite block length performance in information theory dates back
to Strassen~\cite{strassen1962}, who established the $O(1/\sqrt{n})$ refinement
for hypothesis testing.
The channel coding counterpart was developed by Polyanskiy, Poor, and
Verd\'{u}~\cite{polyanskiy2010}, whose work on channel dispersion inspired
the lossy source coding treatment.
Kostina and Verd\'{u}~\cite{kostina2012} established the rate-distortion
dispersion and the normal approximation~(\ref{eq:normal_approx}) for general
discrete memoryless sources, building on contributions by
Ingber and Kochman~\cite{ingber2011}.
The $d$-tilted information framework provides a unified language for
second-order information theory.

% ========================================================================
% SECTION 7: NUMERICAL EXPLORATIONS
% ========================================================================
\section{Numerical Explorations}
\label{sec:numerical}

Every figure in this tutorial was generated programmatically by accompanying
Python scripts.
The scripts, along with the \LaTeX{} source and all generated figures, are
publicly available at:\\
\texttt{https://github.com/anrgusc/rate-distortion-bernoulli-finite}.

\subsection{Computational Tools and Figures}

We provide six Python scripts in the \texttt{scripts/} directory:
\begin{itemize}
    \item \texttt{rate\_distortion.py}: Computes and plots the binary entropy
          function (Figure~\ref{fig:binary_entropy}) and rate-distortion curves
          (Figure~\ref{fig:rd_curves}).
    \item \texttt{blahut\_arimoto.py}: Implements the Blahut-Arimoto algorithm
          with convergence tracking (Figure~\ref{fig:ba_convergence}) and
          validates it against the closed-form solution
          (Figure~\ref{fig:ba_vs_closedform}).
    \item \texttt{dispersion.py}: Computes the $d$-tilted information
          (Figure~\ref{fig:dtilted}) and rate-distortion dispersion
          (Figure~\ref{fig:dispersion}).
          This script also runs a numerical verification that
          $\E[\jmath_X(X,D)] = \RD$ to machine precision.
    \item \texttt{clt\_histogram.py}: Illustrates the CLT approximation
          underlying the normal approximation by plotting the exact PMF
          of the per-symbol $d$-tilted information alongside the Gaussian
          density, for two distortion levels
          (Figure~\ref{fig:clt_histogram}).
    \item \texttt{finite\_blocklength.py}: Computes the normal approximation,
          achievability/converse bounds (Figure~\ref{fig:finite_blocklength}),
          and the rate versus block length curves
          (Figure~\ref{fig:rate_vs_blocklength}).
    \item \texttt{generate\_all\_figures.py}: Master script that runs all of the
          above and generates every figure with consistent styling.
\end{itemize}

To regenerate all figures from scratch, run
\texttt{python scripts/generate\_all\_figures.py} from the repository root.
The only dependencies are NumPy, SciPy, and Matplotlib (see
\texttt{requirements.txt}).

\subsection{Comprehensive Comparison}

Figure~\ref{fig:comprehensive} brings together the asymptotic rate-distortion
function and its finite block length refinements in a single plot.
For a $\Ber(0.3)$ source with excess-distortion probability $\varepsilon = 0.1$,
we show the achievable rate as a function of distortion $D$ for several block
lengths $n \in \{100, 500, 2000\}$.
As $n$ increases, the finite block length curves converge uniformly to the
Shannon limit $\RD$.
The gap is most pronounced at small distortions, where the dispersion $V(D)$ is
larger.

\begin{figure}[htbp]
    \centering
    \includegraphics[width=0.75\textwidth]{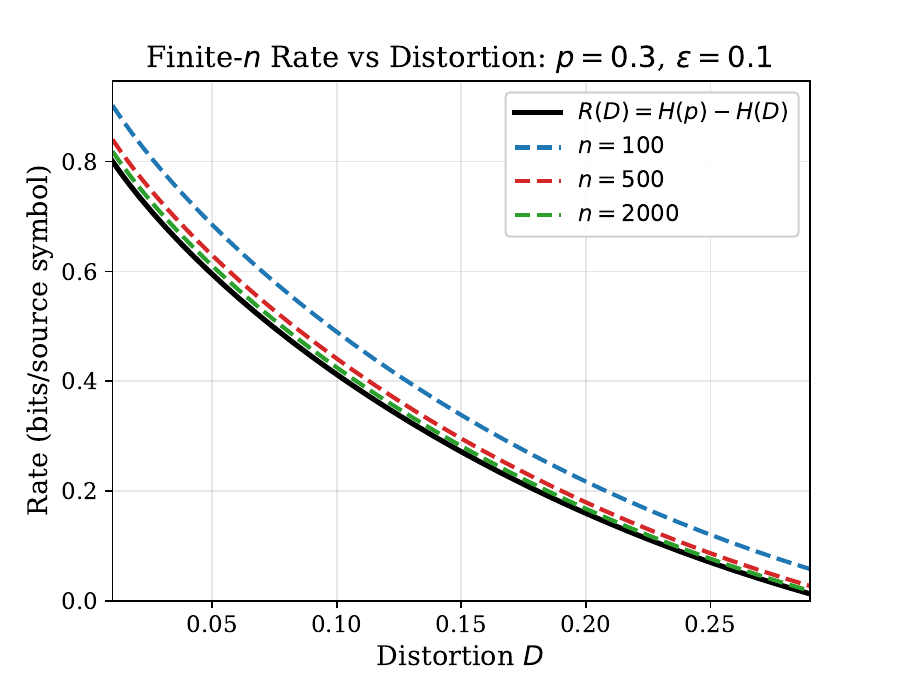}
    \caption{The rate-distortion function $\RD$ (solid black) and the normal
    approximation $R(n, D, \varepsilon)$ for block lengths $n \in \{100, 500, 2000\}$,
    with $p = 0.3$ and $\varepsilon = 0.1$.
    The finite block length penalty is largest at small $D$ (high rate regime)
    and vanishes as $D \to \min(p, 1-p)$.}
    \label{fig:comprehensive}
\end{figure}

\subsection{Blahut-Arimoto Convergence}

The Blahut-Arimoto algorithm exhibits geometric convergence, as shown in
Figure~\ref{fig:ba_convergence}.
The convergence rate depends on the slope parameter $s$: larger $s$
(corresponding to smaller target distortions) leads to faster convergence in
terms of iteration count.
For all tested parameters, the algorithm reaches a relative error below $10^{-10}$
within $50$ iterations.

% ========================================================================
% SECTION 8: CONCLUSION
% ========================================================================
\section{Conclusion}
\label{sec:conclusion}

In this tutorial, we have presented a self-contained development of
rate-distortion theory for the $\Ber(p)$ source with Hamming distortion,
progressing from Shannon's asymptotic limit to the finite block length regime.

The key takeaways are threefold.
First, the rate-distortion function $\RD = \Hp - \HD$ provides a clean and
interpretable limit on lossy compression: the minimum rate is the source
entropy minus the noise entropy.
Second, the Blahut-Arimoto algorithm provides a reliable computational tool
for evaluating rate-distortion functions, even in settings where closed-form
solutions are unavailable.
Third, the finite block length theory, centered on the $d$-tilted information
$\jmath_X(x, D)$ and the dispersion $V(D)$, provides a precise characterization
of the penalty for operating at practical block lengths.
The normal approximation $R(n, D, \varepsilon) \approx \RD + \sqrt{V(D)/n}\, \Qinv(\varepsilon)$
is both elegant in form and useful in practice, offering a direct design rule
for system engineers.

% ========================================================================
% ACKNOWLEDGEMENT
% ========================================================================
\section*{Acknowledgement}

The development of this tutorial and the associated Python code has involved
the use of several AI tools/agents including Claude Code, ChatGPT, and Gemini.
The human author accepts full responsibility for its contents.

% ========================================================================
% BIBLIOGRAPHY
% ========================================================================
\bibliographystyle{IEEEtran}
\bibliography{references}

\end{document}